\documentclass[a4paper,UKenglish,cleveref, autoref, thm-restate]{lipics-v2021}

\pdfoutput=1 %

\bibliographystyle{plainurl}%

\title{String Diagrammatic Trace Theory} %

\author{Matthew Earnshaw}{Department of Software Science, Tallinn University of Technology, Estonia \and \url{https://ioc.ee/~matt} }{matthew.earnshaw@taltech.ee}{https://orcid.org/0000-0001-8236-2811}{}%

\author{Pawe\l{} Soboci\'{n}ski}{Department of Software Science, Tallinn University of Technology, Estonia \and \url{https://ioc.ee/~pawel} }{pawel.sobocinski@taltech.ee}{https://orcid.org/0000-0002-7992-9685}{}

\authorrunning{M. Earnshaw and P. Soboci\'{n}ski} %

\Copyright{Matthew Earnshaw and Pawe\l{} Soboci\'{n}ski} %

\ccsdesc[100]{Theory of computation~Concurrency}
\ccsdesc[100]{Theory of computation~Formal languages and automata theory}
\ccsdesc[100]{Theory of computation~Categorical semantics} %

\keywords{symmetric monoidal categories, string diagrams, Mazurkiewicz traces, asynchronous automata} %

\category{} %

\relatedversion{} %

\funding{This research was supported by the ESF funded Estonian IT Academy research measure (project 2014-2020.4.05.19- 0001) and Estonian Research Council grant PRG1210.}%

\acknowledgements{We would like to thank Chad Nester, Mario Román, and Niels Voorneveld for helpful conversations.}%

\nolinenumbers %

\hideLIPIcs

\EventEditors{John Q. Open and Joan R. Access}
\EventNoEds{2}
\EventLongTitle{42nd Conference on Very Important Topics (CVIT 2016)}
\EventShortTitle{CVIT 2016}
\EventAcronym{CVIT}
\EventYear{2016}
\EventDate{December 24--27, 2016}
\EventLocation{Little Whinging, United Kingdom}
\EventLogo{}
\SeriesVolume{42}
\ArticleNo{23}

\usepackage{float}
\usepackage{cleveref}
\usepackage{unicode-mario}
\usepackage{quiver}
\usepackage{tikz}
\usepackage{xcolor}
\usepackage[]{hyperref}
\definecolor{nordnight}{HTML}{4c566a}
\hypersetup{
  colorlinks = true,
  linkcolor = nordnight,
  citecolor = black,
   urlcolor = black,
}
\usepackage{mathrsfs}
\usepackage{stmaryrd}

\usetikzlibrary{shapes.geometric, arrows}
\usetikzlibrary{arrows,backgrounds,shapes.geometric,shapes.misc,matrix,arrows.meta,decorations.markings}

\pgfdeclarelayer{background}
\pgfdeclarelayer{edgelayer}
\pgfdeclarelayer{nodelayer}
\pgfsetlayers{background,edgelayer,nodelayer,main}
\tikzset{baseline=-0.5ex}
\definecolor{light-gray}{gray}{.7}
\tikzstyle{none}=[inner sep=0pt]
\tikzstyle{plain}=[inner sep=0pt]
\tikzstyle{black}=[circle, draw=black, fill=black, inner sep=0pt, minimum size=3.5pt]
\tikzstyle{black-faded}=[circle, draw=light-gray, fill=light-gray, inner sep=0pt, minimum size=4pt]
\tikzstyle{white}=[circle, draw=black, fill=white, inner sep=0pt, minimum size=3.5pt]
\tikzstyle{white-faded}=[circle, draw=light-gray, fill=white, inner sep=0pt, minimum size=4.5pt]
\tikzstyle{delay}=[fill=black, regular polygon, regular polygon sides=3,rotate=-90, scale=.55]
\tikzstyle{delay-op}=[fill=black, regular polygon, regular polygon sides=3,rotate=90, scale=.55]
\tikzstyle{reg}=[draw, fill=white, rounded rectangle, rounded rectangle left arc=none, minimum height=1em, minimum width=1em, node font={\scriptsize}]
\tikzstyle{coreg}=[draw, fill=white, rounded rectangle, rounded rectangle right arc=none, minimum height=1em, minimum width=1em, node font={\scriptsize}]
\tikzstyle{rn}=[circle, draw=red, fill=red, inner sep=0pt, minimum size=4pt]
\tikzstyle{place}=[circle, draw=black, fill=white, inner sep=0pt, minimum size=8pt]
\tikzstyle{box} = [rectangle, minimum width=1cm, minimum height=1cm,text centered, draw=black, thick, fill=none]
\tikzstyle{sbox} = [rectangle, minimum width=0.18cm, minimum height=0.18cm,text centered, draw=black, thick, fill=none]

\newcommand{\symDiag}{\raisebox{.25em}{%
\begin{tikzpicture}[scale=.3]
	\begin{pgfonlayer}{nodelayer}
		\node [style=none] (0) at (-0.75, -0.5) {};
		\node [style=none] (1) at (-0.75, 0.5) {};
		\node [style=none] (2) at (0.75, -0.5) {};
		\node [style=none] (3) at (0.75, 0.5) {};
	\end{pgfonlayer}
	\begin{pgfonlayer}{edgelayer}
		\draw [in=180, out=0] (1.center) to (2.center);
		\draw [in=180, out=0] (0.center) to (3.center);
	\end{pgfonlayer}
\end{tikzpicture}
}}
\newcommand{\comp}{\mathop{\fatsemi}}%

\newcommand{\genr}[1][\gamma]{\,\providecommand{\generatorName}{}\renewcommand{\generatorName}{#1}\raisebox{0em}{%
\begin{tikzpicture}[node distance=0.1cm]
  \draw (-0.4,-0.15) -- (-0.21,-0.15);
  \draw (-0.4,0.15) -- (-0.21,0.15);
  \node [style=none,scale=0.75] at (-0.35,0) {\Large $\svdots$};
  \node (gamma) [sbox] {$\generatorName$};
  \node [style=none,scale=0.75] at (0.35,0) {\Large $\svdots$};
  \draw (0.21,-0.15) -- (0.4,-0.15);
  \draw (0.21,0.15) -- (0.4,0.15);
\end{tikzpicture}
}\!\!}
\newcommand{\genrw}[1][\gamma]{\,\providecommand{\generatorName}{}\renewcommand{\generatorName}{#1}\raisebox{0em}{%
\begin{tikzpicture}[node distance=0.1cm]
  \draw (-0.5,-0.15) -- (-0.3,-0.15);
  \draw (-0.5,0.15) -- (-0.3,0.15);
  \node [style=none,scale=0.75] at (-0.45,0) {\Large $\svdots$};
  \node (gamma) [sbox] {$\generatorName$};
  \node [style=none,scale=0.75] at (0.45,0) {\Large $\svdots$};
  \draw (0.3,-0.15) -- (0.5,-0.15);
  \draw (0.3,0.15) -- (0.5,0.15);
\end{tikzpicture}
}\!\!}

\usetikzlibrary{calc}

\tikzset{
  wire/.style={rounded corners=3, line width=.5pt}
, iden/.style={line width=.5pt}
, dot/.style={draw,fill=black, circle, inner sep=1pt}
, square/.style={draw,fill=white, rectangle, inner sep=1.5pt}
, wdot/.style={draw,fill=white, circle, inner sep=1pt}
, edot/.style={draw,fill=red!40, circle, inner sep=1pt}
, cdot/.style={draw,fill=#1, circle, inner sep=1pt}
, labeled/.style={draw,fill=white, inner sep=1pt}
}

\newcommand*{\Scale}[2][4]{\scalebox{#1}{\ensuremath{#2}}}%

\setlength{\fboxsep}{2pt}

\makeatletter
\newcommand{\svdots}{%
  \vbox{\fontsize{\sf@size}{\sf@size pt}\linespread{0.3}\selectfont
    \kern0.2\baselineskip
    \hbox{.}\hbox{.}\hbox{.}%
    \kern0.1\baselineskip
  }%
}
\makeatother
\makeatletter
\newcommand{\nicelinktarget}[1]{\Hy@raisedlink{\Hy@raisedlink{\hypertarget{#1}{}}}}
\makeatother
\newcommand\defining[2]{\nicelinktarget{#1}{\color{black}{#2}}}

\usepackage{xparse}

\NewDocumentCommand\Par{o}
{
  \IfNoValueTF{#1}
  {\textsf{Par}}
  {\textsf{Par}_{#1}}
}

\NewDocumentCommand\CRel{o}
{
  \IfNoValueTF{#1}
  {\textsf{CRel}}
  {\textsf{CRel}_{#1}}
}

\NewDocumentCommand\Lang{o}
{
\IfNoValueTF{#1}
  {\mathscr{L}}
  {\mathscr{L}(#1)}
}

\NewDocumentCommand\Rel{oo}
{
  \IfNoValueTF{#1}
  {\textsf{Rel}}
  \IfNoValueTF{#2}
  {\textsf{Rel}_{#1}}
  {\textsf{Rel}_{#1,#2}}
}

\NewDocumentCommand\detr{o}
{
  \IfNoValueTF{#1}
  {\mathscr{D}}
  {\mathscr{D}(#1)}
}

\NewDocumentCommand\endo{O{\mathscr{C}}m}
{
  {#1}_{#2}
}

\NewDocumentCommand\loc{m}
{
  \textsf{loc}(#1)
}

\NewDocumentCommand\Fprop{}
{
  \mathscr{F}_{\!\Scale[0.4]{\symDiag}\!}
}

\NewDocumentCommand\Fprem{}
{
 \mathscr{F}_p
}

\NewDocumentCommand\Dist{}
{
  \mathsf{Dist}
}

\NewDocumentCommand\Ind{}
{
  \mathsf{Ind}
}

\NewDocumentCommand\traceEndo{m}
{
  \Fprop{#1}\!\left(\substack{1\\ \svdots \\ n}, \substack{1 \\ \svdots \\ n}\right)
}

\NewDocumentCommand\preEndo{m}
{
  \Fprem{#1}\!\left(\substack{1 \\ \svdots \\ n}, \substack{1 \\ \svdots \\ n}\right)
}

\NewDocumentCommand\stk{mm}
{
  \substack{#1 \\ \svdots \\ #2}
}

\newcommand\regularMonoidalGrammar{\hyperlink{linkregmongram}{regular monoidal grammar}}
\newcommand\regularMonoidalGrammars{\hyperlink{linkregmongram}{regular monoidal grammars}}

\newcommand\regularSymmetricMonoidalLanguage{\hyperlink{linkregmonlang}{regular symmetric monoidal language}}

\newcommand\regularSymmetricMonoidalLanguages{\hyperlink{linkregmonlang}{regular symmetric monoidal languages}}

\newcommand\symmetricMonoidalLanguage{\hyperlink{linkmonlang}{symmetric monoidal language}}
\newcommand\symmetricMonoidalLanguages{\hyperlink{linkmonlang}{symmetric monoidal languages}}

\newcommand\MonoidalGraphs{\hyperlink{linkmongraph}{Monoidal graphs}}
\newcommand\monoidalGraphs{\hyperlink{linkmongraph}{monoidal graphs}}
\newcommand\monoidalGraph{\hyperlink{linkmongraph}{monoidal graph}}

\newcommand\distribution{\hyperlink{linkdist}{distribution}}
\newcommand\DistributedAlphabets{\hyperlink{linkdist}{Distributed alphabets}}
\newcommand\distributedAlphabet{\hyperlink{linkdist}{distributed alphabet}}
\newcommand\distributions{\hyperlink{linkdist}{distributions}}

\newcommand\symmetricMonoidalAutomata{\hyperlink{linksymmonaut}{symmetric monoidal automata}}
\newcommand\symmetricMonoidalAutomaton{\hyperlink{linksymmonaut}{symmetric monoidal automaton}}

\newcommand\monoidalAutomata{\hyperlink{linkndmonaut}{monoidal automata}}
\newcommand\monoidalAutomaton{\hyperlink{linkndmonaut}{monoidal automaton}}

\newcommand\monoidalSemiAutomaton{\hyperlink{linkndsmonaut}{monoidal semi-automaton}}

\newcommand\MonoidalTraceLanguages{\hyperlink{linkmontracelang}{Monoidal trace languages}}
\newcommand\monoidalTraceLanguages{\hyperlink{linkmontracelang}{monoidal trace languages}}
\newcommand\monoidalTraceLanguage{\hyperlink{linkmontracelang}{monoidal trace language}}

\newcommand\traceLanguages{\hyperlink{linkmaztracelang}{trace languages}}
\newcommand\traceLanguage{\hyperlink{linkmaztracelang}{trace language}}
\newcommand\MazurkiewiczTraceLanguages{\hyperlink{linkmaztracelang}{Mazurkiewicz trace languages}}
\newcommand\MazurkiewiczTraces{\hyperlink{linkmaztracelang}{Mazurkiewicz traces}}

\newcommand\monoidalDistributedAlphabet{\hyperlink{linkmondist}{monoidal distributed alphabet}}
\newcommand\monoidalDistributedAlphabets{\hyperlink{linkmondist}{monoidal distributed alphabets}}

\newcommand\stringDiagrams{\hyperlink{linkstringdiag}{string diagrams}}

\newcommand\traceMonoid{\hyperlink{linktracemon}{trace monoid}}
\newcommand\traceMonoids{\hyperlink{linktracemon}{trace monoids}}

\newcommand\independenceRelation{\hyperlink{linkir}{independence relation}}
\newcommand\independenceRelations{\hyperlink{linkir}{independence relations}}

 \definecolor{mygreen}{RGB}{0, 158, 115}
\definecolor{mylightblue}{RGB}{86, 180, 233}
\hyphenation{Mazur-kie-wicz}

\begin{document}

\maketitle

\begin{abstract}
We extend the theory of formal languages in monoidal categories to the multi-sorted, symmetric case, and show how this theory permits a graphical treatment of topics in concurrency. In particular, we show that Mazurkiewicz trace languages are precisely \emph{symmetric monoidal languages} over \emph{monoidal distributed alphabets}. We introduce \emph{symmetric monoidal automata}, which define the class of regular symmetric monoidal languages. Furthermore, we prove that Zielonka's asynchronous automata coincide with symmetric monoidal automata over monoidal distributed alphabets. Finally, we apply the string diagrams for symmetric premonoidal categories to derive serializations of traces.
\end{abstract}

\section{Introduction}
\emph{Monoidal languages} \cite{earnshaw22} are a generalization of formal languages of words to formal languages of \emph{string diagrams}. String diagrams \cite{joyal91,Selinger2011} are a graphical representation of morphisms in \emph{monoidal categories}, introduced in \Cref{sec:props}. Monoidal categories can be considered \emph{2-dimensional monoids} \cite{BURRONI199343}: just as monoids are categories with one object, in which the morphisms are elements of the monoid, (strict) monoidal categories can also be defined as 2-categories with one object. Accordingly, \emph{monoidal languages} are subsets of morphisms in free monoidal categories, just as word languages are subsets of free monoids. \emph{Regular} monoidal languages are those specifiable by means of finitary grammars or automata. Our paper \cite{earnshaw22} introduced these devices and examined the properties of their languages in the case of single-sorted, planar monoidal categories. These include regular languages of words and trees, as well as languages of \emph{planar} string diagrams that are neither linear nor tree-like.

In this paper, motivated by concurrency theory, we extend this theory to \emph{coloured props}: multi-sorted monoidal categories with symmetries (\Cref{sec:props}). The resulting theory of \emph{symmetric} monoidal languages (\Cref{sec:languages}) captures languages of diagrams having multiple colours of string and in which strings may cross, permitting non-planar diagrams. In terms of concurrency, colours represent different \emph{runtimes}, or \emph{threads of execution}.%

Indeed, in \Cref{sec:mazurkiewicz} we show that Mazurkiewicz trace languages \cite{mazurkiewicz} are exactly symmetric monoidal languages over alphabets of a particular shape called \emph{monoidal distributed alphabets}. In \Cref{sec:automata} we introduce automata for symmetric monoidal languages, defining the class of \emph{regular} symmetric monoidal languages. Then, in \Cref{sec:asynch} we show that these are exactly the asynchronous automata of Zielonka \cite{zielonka} when instantiated over monoidal distributed alphabets. Finally, in \Cref{sec:serial} we use the algebra of symmetric \emph{premonoidal} categories to show how serialization of traces can be treated string-diagrammatically.

\subsection{Related work}
Our previous work \cite{earnshaw22} introduced monoidal languages in the planar, single-sorted case; that is, languages of morphisms in free \emph{pros}. Similar languages of graphs were studied by Bossut \cite{bossut}, but their underlying algebra was not made explicit. Here, we again leverage the algebraic perspective, extending our theory to symmetric multi-sorted monoidal categories (props).

In the introduction to Joyal \& Street's foundational work on string diagrams for monoidal categories \cite{joyal91}, it is suggested that string diagrams have a connection to the \emph{heaps} of Viennot \cite{viennot}. Heaps are known to be equivalent to Mazurkiewicz trace monoids (also known as partially commutative monoids) \cite{cartier}, but a precise formulation of the suggested relation with string diagrams has not appeared in the literature until now.

The notion of dependence graph \cite{dependence} has also been used to give a topological presentation of Mazurkiewicz traces. Our use of the algebra of monoidal categories, rather than graphs, has various advantages. For example, we can apply our language theory for monoidal categories to traces, and we see notions such as asynchronous automata arise naturally from this. It also suggests generalizations of trace languages, in particular going beyond the case of atomic actions (\Cref{rem:effectful}). Finally, it brings our work into proximity with the semantics of Petri nets and other formalisms for concurrency based on monoidal categories \cite{baez2021categories,lmcs:10825}.

\section{Monoidal Graphs, Props and their String Diagrams} \label{sec:props}

In this section we recall the basic definitions used in the following, including the specific flavour of monoidal categories known as \emph{props} \cite{maclane}, along with their string diagrams \cite{joyal91,Selinger2011}. Just as a category can be presented by a directed graph, (strict) monoidal categories can be presented by \emph{monoidal graphs}, a kind of multi-input, multi-output directed graph.

\begin{definition} \label{defn:mongraph} \defining{linkmongraph}
A monoidal graph $𝒢$ is a set $B_𝒢$ of boxes, a set $S_𝒢$ of sorts, and functions $s,t: B_𝒢 \rightrightarrows {S_𝒢}^{*}$ to the free monoid over $S_𝒢$, giving source and target boundaries of each box.
\end{definition}

The alphabets of monoidal languages will be finite \monoidalGraphs{}: those in which $B_{\mathcal{G}}$ and $S_{\mathcal{G}}$ are both finite sets. In fact, since we are interested in finite state machines over finite alphabets, we will work exclusively with finite \monoidalGraphs{}. Diagrammatically, a (finite) \monoidalGraph{} can be pictured as a collection of boxes, labelled by elements of $B_𝒢$ with strings entering on the left and exiting on the right, labelled by sorts given by the source and target functions. For example, the following depicts the \monoidalGraph{} $\mathcal{G}$ with $B_{\mathcal{G}} = \{\gamma, \gamma'\}, S_{\mathcal{G}} = \{A,B\}, s(\gamma) = AB, t(\gamma) = ABA, s(\gamma') = A, t(\gamma') = BB$:

\begin{figure}[H] \label{fig:monoidal-graph}
  \centering
  \includegraphics[scale=1.2]{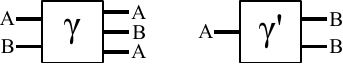}
\end{figure} \vspace*{-3mm}

Sorts of a \monoidalGraph{} are sometimes called \emph{colours}, since we could equally use different colours of string to represent different sorts, and we shall do so in places below. For a box $γ \in B_𝒢$ we call $s(γ)$ and $t(γ)$ the \emph{arity} and \emph{coarity} of $γ$, respectively, and write $γ : s(γ) → t(γ)$. We will also call $γ$ considered together with its arity and coarity a \emph{generator}.

\MonoidalGraphs{} are generating data for monoidal categories. Recall that a \emph{strict monoidal category} is a category $𝒞$, equipped with a functor $⊗ : 𝒞 × 𝒞 → 𝒞$ (the \emph{monoidal product}) and a unit object $I ∈ 𝒞$, such that $⊗$ is associative and unital. A strict monoidal category is \emph{symmetric} if there is a natural family of \emph{symmetry morphisms} $\sigma_{A,B} : A ⊗ B → B ⊗ A$, for each pair of sorts, satisfying $σ_{B,A} ∘ σ_{A,B} = 1_{A ⊗ B}$. The monoidal product turns the sets of objects and morphisms in $𝒞$ into monoids. A \emph{prop} is a symmetric strict monoidal category whose monoid of objects is a free monoid.\footnote{Some literature takes prop to mean that the monoid of objects is generated by a single object (and so isomorphic to $ℕ$), using the term \emph{coloured} prop for the general case above.}
While the above data can be intimidating to the non-expert, the free prop
$\Fprop{𝒢}$ on a \monoidalGraph{} $𝒢$ can be described in an intuitive and straightforward way: its arrows are the \emph{string diagrams} generated by $𝒢$.

\begin{definition} \label{defn:freeprop}
  The free prop $\Fprop{𝒢}$ on a \monoidalGraph{} $𝒢$ has monoid of objects $S_𝒢^{*}$ and morphisms \emph{string diagrams} inductively defined as follows\defining{linkstringdiag}{}:
  \begin{figure}[H]
    \centering
  \includegraphics[width=\textwidth]{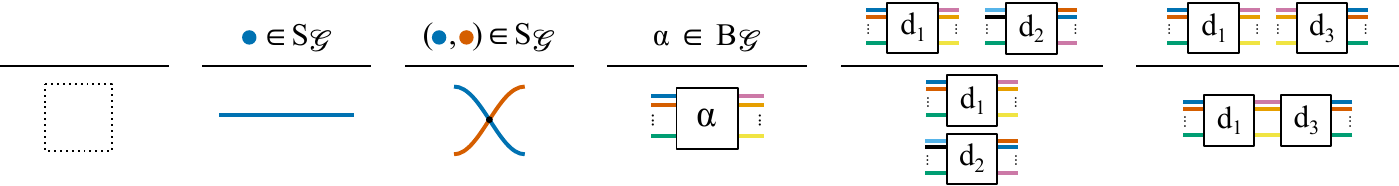}
  \end{figure} \vspace*{-2mm}
  Left to right: the empty diagram is a diagram; for every sort, the string on that sort is a diagram; for every pair of sorts, the symmetric braiding is a diagram; the diagram for every generator $α$ is a diagram; for any two diagrams their vertical juxtaposition is a diagram; and for any two diagrams with matching right and left boundaries, the diagram obtained by joining the matching wires is a diagram (their composition). The monoidal product is given on objects by concatenation, on diagrams by juxtaposition, and the unit is the empty word.
\end{definition}

The idea is simple: we treat generators like circuit components, and we have a supply of wires (identity morphisms). We also have the ability to cross wires, without tangling them; we do not distinguish over-crossings from under-crossings. A string diagram is then just any (open) circuit that we can build. This notation is sound and complete: an equation between morphisms of strict monoidal categories follows from their axioms if and only if it holds between string diagrams up to planar isotopy \cite{joyal91}. Working with string diagrams rather than the usual term syntax for morphisms is more intuitive, and leads to shorter proofs as the structural equations hold automatically: for example, interchange of morphisms (\Cref{fig:interchange}, left), unbraiding of symmetries (centre), and sliding of morphisms past symmetries (right).

\begin{figure}[H]
  \centering
  \includegraphics[width=\textwidth]{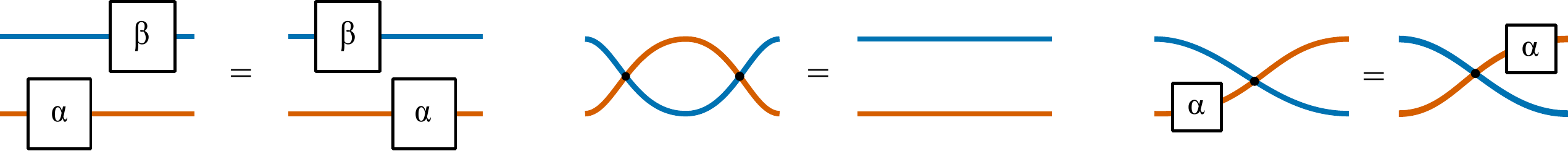}
  \caption{These pairs of string diagrams are equal, reflecting the functoriality of $⊗$ (\emph{interchange}), inverses of symmetries, and naturality of symmetries, respectively.} \label{fig:interchange}
\end{figure} \vspace*{-2mm}

\begin{definition} \label{defn:mgmorphism}
  A morphism of \monoidalGraphs{} $φ : ℋ → 𝒢$ is given by functions $B_φ : B_ℋ → B_𝒢$ and $S_φ : S_ℋ → S_𝒢$ compatible with source and target functions: $S_φ^{*} ∘ s = s ∘ B_φ$ and $S_φ^{*} ∘ t = t ∘ B_φ$, where $S_φ^{*}$ is the unique monoid homomorphism determined by $S_φ$.
\end{definition}

Morphisms of \monoidalGraphs{} freely generate morphisms of props: strict monoidal functors preserving colours. This extends to an adjunction $\Fprop{} \dashv \mathscr{U}$ between the categories of \monoidalGraphs{} and props, where $\mathscr{U}$ takes the underlying \monoidalGraph{} of a prop \cite{joyal91}. %

Monoidal categories have been applied to the study of both computing and physical processes \cite{COECKE201659,coecke2017picturing,dilavore2022coinductive,PAVLOVIC201394}. In these contexts, the monoidal product represents parallel composition of processes, and interchange reflects the \emph{independence} of processes running in parallel. This is the main feature of monoidal categories that we will leverage in our representation of \emph{traces} (\Cref{sec:mazurkiewicz}). The use of multi-sorted props will allow fine-grained control of interchange.

\section{Symmetric Monoidal Languages} \label{sec:languages}
 Our paper \cite{earnshaw22} treated the case of languages, grammars and automata over single-sorted \emph{pros} (strict monoidal categories \emph{without} symmetries), corresponding to languages of \emph{planar} string diagrams with one string colour. In this section we introduce the \emph{multi-sorted} (or ``coloured'') \emph{symmetric monoidal languages}, which will be needed in the following to extend monoidal language theory to trace theory. In \Cref{sec:automata} we introduce the corresponding automata.

 Just as a classical formal language is a subset of a free monoid, a symmetric monoidal language is a subset of morphisms in a free prop:

\begin{definition} \label{defn:symmonlang} \defining{linkmonlang}
  Let $Γ$ be a finite monoidal graph. A \emph{symmetric monoidal language} over $Γ$ is a set of morphisms in the free prop $\Fprop{Γ}$ over $Γ$.
\end{definition}

A morphism of finite directed graphs $G → Σ$, where $Σ$ is a graph with one vertex, amounts to a labelling of the edges of $G$ by edges of $Σ$. This is the starting point of Walters' definition of regular grammar \cite{WALTERS1989199}, which inspires the following definition:

\begin{definition} \defining{linkregmongram}
A regular monoidal grammar is a morphism of finite monoidal graphs. %
\end{definition}

For a \regularMonoidalGrammar{} $M  \overset{φ}{→} Γ$, the \monoidalGraph{} $Γ$ is the \emph{alphabet}, and the generators of $M$, with their labelling by $φ$, correspond to production rules: see \Cref{ex:circ}.

In the classical setting of word languages, a morphism of finite directed \emph{graphs} $G → Σ$ determines a \emph{regular language} over $Σ$ once we specify initial and final state vertices in $G$. In a \regularMonoidalGrammar{} $M → Γ$, the ``vertices'' of $M$ are words over $S_M$, leading to various natural choices of boundary condition (\Cref{rem:boundary}). In this paper, we will take initial and final words over $S_M^{*}$. Specifying these words defines the \symmetricMonoidalLanguage{} of the grammar (\Cref{defn:regmonlang}), and we define the languages arising in this way to be the \regularSymmetricMonoidalLanguages{}.

We illustrate these definitions with some pedagogical toy examples. In the remaining sections of this paper, we turn to our extended application in concurrency, and we shall see that Mazurkiewicz trace languages are a natural example of \symmetricMonoidalLanguages{}.

\begin{example}
  Let $φ : M → Γ$ be the \regularMonoidalGrammar{} where $M$ and $Γ$ have a single sort $(•)$ and no boxes, with $S_φ(•) = •$, and initial and final states $n \in S_M^{*} ≅ ℕ$. Then the \symmetricMonoidalLanguage{} of this grammar is the set of \emph{permutations} of $n$ wires: morphisms consisting only of symmetries and identities.
\end{example}

Props have been used to give syntax and semantics for various kinds of \emph{signal flow graph} and \emph{circuit diagrams} \cite{baez,boisseau,10.1145/2775051.2676993}. Intuitively, props are well suited for this purpose since wires may freely cross in a circuit.

\begin{example} \label{ex:circ}
We give a \regularMonoidalGrammar{} for the syntax of (open) circuits with $n \geqslant 0$ capacitors in series with a single voltage source (\Cref{fig:circuit}). The alphabet $Γ$ has a single sort, and boxes four circuit components (\Cref{fig:circuit}, left). The \monoidalGraph{} $M$ has four sorts $\{S,A,B,C\}$ and four boxes $s : S → AB, c : A → A, v : B → C, s' : AC → S$. $S_φ$ maps the four sorts to the single sort of $Γ$, and $B_φ$ maps each box to a circuit component. We can draw the grammar $φ : M → Γ$ in a single diagram by drawing the graph for $M$ but replacing each box $b$ with its image under the grammar morphism $B_φ(b)$ (\Cref{fig:circuit}, centre). The initial and final languages are the single state $\{S\}$. Intuitively, the \symmetricMonoidalLanguage{} determined by the grammar is all of the string diagrams $S → S$ that can be built using the ``sorted'' boxes of $Γ$, then forgetting the sorts.
\begin{figure}[h]
  \centering
  \includegraphics{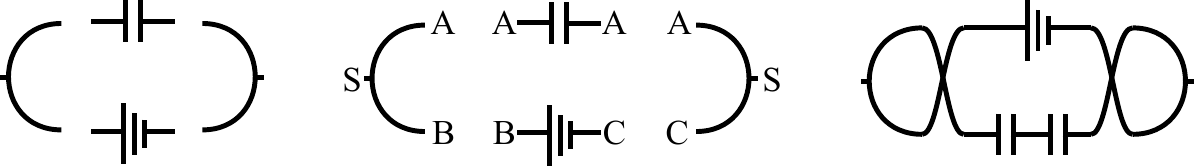}
  \caption{(Left) The alphabet $Γ$, giving syntax for circuits. (Centre) A regular monoidal grammar over $Γ$. (Right) An element of the regular symmetric monoidal language determined by this grammar.}
  \label{fig:circuit}
\end{figure}
\end{example}

\begin{remark} \label{rem:boundary}
As mentioned above, there are various possible choices for the ``initial and final states'' of a monoidal grammar. In our previous paper \cite{earnshaw22}, we took the empty word, giving languages of scalar string diagrams (i.e. no ``dangling wires''): this neatly generalizes tree grammars. More generally, one can take initial and final \emph{regular languages} of states over $S_M$, as considered by Bossut \cite{bossut}.
\end{remark}

The free prop construction can be used to concisely describe the \symmetricMonoidalLanguage{} of a \regularMonoidalGrammar{}, defining the class of \emph{regular symmetric monoidal languages}:

\begin{definition} \label{defn:regmonlang} \defining{linkregmonlang}
Let $(φ : M → Γ, I, F)$ be a regular monoidal grammar equipped with regular languages $I,F \subseteq S_M^{*}$. This determines a symmetric monoidal language by taking the image of the set of morphisms $\bigcup_{i \in I, f \in F}\Fprop{M}(i, f)$ under $\Fprop{φ}$, giving a set of morphisms in $\Fprop{Γ}$. The languages arising in this way are defined to be the \emph{regular} symmetric monoidal languages. %
\end{definition}

In this paper, we will only need the case where $I,F$ consist of single words. The slogan for the general case is that 2-dimensional regular languages have 1-dimensional regular boundaries. In \Cref{sec:automata}, we will see that \regularSymmetricMonoidalLanguages{} may equivalently be specified by \emph{non-deterministic monoidal automata}.

\begin{remark} \label{rem:gram}
A \regularMonoidalGrammar{} determines not only a \regularSymmetricMonoidalLanguage{}, but also a language in any algebraic structure generated by \monoidalGraphs{}, including planar monoidal categories (treated in \cite{earnshaw22}), and premonoidal categories (which we will use in \Cref{sec:serial}). This is analogous to the way in which a finite labelled directed graph may generate both a subset of a free monoid, but also a subset of a free group, by freely adding inverses to the graph. Moreover, many properties of planar regular monoidal languages such as their closure properties proved in \cite{earnshaw22} only use grammars, and hence the same proofs work for languages in these other algebras.
\end{remark}

\section{Mazurkiewicz Trace Languages as Symmetric Monoidal Languages} \label{sec:mazurkiewicz}
The theory of Mazurkiewicz traces \cite{booktraces,mazurkiewicz,mukund} provides a simple but powerful model of concurrent systems. Traces are a generalization of \emph{words} in which specified pairs of letters can commute. If we think of letters as corresponding to atomic \emph{actions}, then commuting letters reflect the \emph{independence} of those particular actions and so their possible concurrent execution: $ab$ is observationally indistinguishable from $ba$ if $a$ and $b$ are independent.

In this section, we show that trace languages are \symmetricMonoidalLanguages{} over \monoidalGraphs{} of a particular form that we call \emph{\monoidalDistributedAlphabets{}}. In Section \ref{sec:automata} we introduce \emph{symmetric \monoidalAutomata{}}, which operationally characterize the \regularSymmetricMonoidalLanguages{}. In Section \ref{sec:asynch} we turn to \emph{asynchronous automata} \cite{zielonka}, a well-known model accepting exactly the \emph{recognizable} trace languages, and show that these automata are precisely symmetric \monoidalAutomata{} over \monoidalDistributedAlphabets{}.

\subsection{Independence and distribution}

We recall some definitions from Mazurkiewicz trace theory, before recasting them in terms of monoidal languages. Fix a finite set $\Sigma$, an alphabet thought of as a set of atomic actions.

\begin{definition} \defining{linkir}
  An \emph{independence relation} on $\Sigma$ is a symmetric, irreflexive relation $I$. The induced \emph{dependence relation}, $D_I$ is the complement of $I$.
\end{definition}

\begin{definition} \label{defn:tracemonoid} \defining{linktracemon}
  For $I$ an \independenceRelation{}, let $\equiv_I$ be the least congruence on $Σ^{*}$ such that $∀ a,b$: $(a,b) \in I \implies ab \equiv_I ba$. The quotient monoid $\mathcal{T}(Σ,I) := \Sigma^{*}/{\equiv_I}$ is the \emph{trace monoid}.
\end{definition}

\begin{definition} \defining{linkmaztracelang}
  A (Mazurkiewicz) \emph{trace language} over $(Σ,I)$ is a subset of the \traceMonoid{} $\mathcal{T}(Σ,I)$. %
\end{definition}

An element of $\mathcal{T}(Σ,I)$ or \emph{trace over $(Σ,I)$} is thus an equivalence class of words up to commutation of independent letters. A trace language may be thought of as the set of possible observations of a concurrent system's behaviour, in which independent letters stand for actions which may occur concurrently. Independence relations correspond to \emph{distributions}:

\begin{definition}[\cite{mukund}] \label{defn:dist} \defining{linkdist}
  A \emph{distribution} of an alphabet $\Sigma$ is a finite tuple of non-empty alphabets $(\Sigma_1, ..., \Sigma_k)$ such that $\bigcup_{i=1}^k \Sigma_i = \Sigma$.
\end{definition}

\begin{proposition}[\cite{mukund}] \label{prop:loc}
A \distribution{} of $\Sigma$ corresponds to a function $\textsf{loc} : \Sigma \to \mathscr{P}^{+}(\{1,...,k\}) : σ ↦ \{ i \mid σ \in Σ_i \}$.
\end{proposition}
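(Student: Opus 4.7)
The plan is to exhibit an explicit bijection between distributions $(\Sigma_1,\ldots,\Sigma_k)$ of $\Sigma$ and appropriate functions $\Sigma \to \mathscr{P}^{+}(\{1,\ldots,k\})$, by constructing maps in both directions and verifying that they are mutually inverse.

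First, I would check that the assignment given in the statement is well-defined: starting from a distribution $(\Sigma_1,\ldots,\Sigma_k)$, the formula $\textsf{loc}(\sigma) = \{ i \mid \sigma \in \Sigma_i\}$ indeed lands in $\mathscr{P}^{+}(\{1,\ldots,k\})$ rather than merely $\mathscr{P}(\{1,\ldots,k\})$, precisely because of the covering condition $\bigcup_{i=1}^{k}\Sigma_i = \Sigma$: every $\sigma \in \Sigma$ belongs to at least one component, so $\textsf{loc}(\sigma)$ is nonempty.

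For the reverse direction, given $\textsf{loc}: \Sigma \to \mathscr{P}^{+}(\{1,\ldots,k\})$, I would define $\Sigma_i := \{\sigma \in \Sigma \mid i \in \textsf{loc}(\sigma)\}$ and check that $(\Sigma_1,\ldots,\Sigma_k)$ is a distribution in the sense of \Cref{defn:dist}. Covering is immediate: every $\sigma$ has $\textsf{loc}(\sigma)$ nonempty, so $\sigma$ lies in $\Sigma_i$ for some $i$. Composition in either order clearly recovers the starting data: $i \in \textsf{loc}(\sigma)$ iff $\sigma \in \Sigma_i$ is precisely the definition on both sides.

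The main subtlety is ensuring each $\Sigma_i$ in the reverse construction is nonempty, as required by \Cref{defn:dist}. This is not automatic from $\textsf{loc}$ merely mapping into $\mathscr{P}^{+}$, so the correspondence should be understood as being with those $\textsf{loc}$ for which every index $i \in \{1,\ldots,k\}$ occurs in $\textsf{loc}(\sigma)$ for at least one $\sigma$. I would state this compatibility explicitly, noting that it is a mild and natural hypothesis — it simply rules out degenerate distributions with redundant empty components — and is the only nontrivial point in the otherwise straightforward verification.
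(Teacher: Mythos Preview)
Your argument is correct and complete; you have identified exactly the content of the correspondence, including the one genuine subtlety (surjectivity of the index set onto $\{1,\ldots,k\}$ being needed to recover nonempty $\Sigma_i$). Note, however, that the paper does not supply its own proof of this proposition: it is stated as a citation to \cite{mukund} and left without a proof environment, so there is nothing in the paper to compare your approach against.
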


Such a function gives the set of ``locations'' of each action $\sigma \in \Sigma$. In terms of concurrency, we can consider this to be a set of memory locations, threads of execution, or runtimes in which $\sigma$ participates. In particular, every action has a non-empty set of locations.

A well-known construction \cite{mukund} allows us to move between \independenceRelations{} and \distributions{}: locations correspond to maximal cliques in the graph of the dependency relation. We recall this construction in the proof of Proposition \ref{prop:subposet}, which refines this correspondence.

Let $\Ind_\Sigma$ be the poset of \independenceRelations{} on $Σ$, with order the inclusion of relations. Similarly, define a preorder $\Dist_Σ$ on distributions by $(\Sigma_1, ..., \Sigma_p) \leqslant (\Sigma_1', ..., \Sigma_q')$ iff for each pair of distinct elements $a,b \in Σ$, if there exists $1 \leqslant j \leqslant q$ such that $Σ'_j$ contains both $a$ and $b$, then there exists an $Σ_i$ containing both $a$ and $b$. Finally, quotient this preorder by taking distributions to be equal up to permutation.

\begin{proposition} \label{prop:subposet}
 There is a Galois insertion $\Ind_Σ \hookrightarrow \Dist_Σ$.
\end{proposition}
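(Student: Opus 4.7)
The plan is to exhibit the Galois insertion by constructing explicit adjoint maps in both directions and verifying the unit is an identity. Write $f : \Ind_\Sigma \to \Dist_\Sigma$ for the map sending $I$ to the (equivalence class of the) distribution whose components $\Sigma_i$ are the maximal cliques of the dependence graph $D_I$; and $g : \Dist_\Sigma \to \Ind_\Sigma$ for the map sending $(\Sigma_1,\ldots,\Sigma_k)$ to the relation $\{(a,b) : a \neq b,\ \loc(a) \cap \loc(b) = \emptyset\}$, i.e.\ pairs that share no location in the sense of \Cref{prop:loc}. I would first check well-definedness: every element of $\Sigma$ lies in at least one maximal clique of $D_I$ (any singleton $\{a\}$ is a clique and extends to a maximal one), so $f(I)$ is a genuine distribution; and $g$ descends along the permutation quotient since the resulting relation only depends on the multiset of $\Sigma_i$'s.

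Next I would verify monotonicity of both maps from the definitions. For $f$, note that $I \subseteq I'$ is equivalent to $D_{I'} \subseteq D_I$, so any pair $\{a,b\}$ dependent under $D_{I'}$ is dependent under $D_I$ and hence contained in a common maximal clique of $D_I$; this is exactly $f(I) \leqslant f(I')$. For $g$, observe that $D \leqslant D'$ in $\Dist_\Sigma$ means every pair co-located in $D'$ is co-located in $D$; contrapositively, disjoint-location pairs under $D$ are disjoint-location under $D'$, so $g(D) \subseteq g(D')$.

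The adjunction $f \dashv g$ reduces to chasing a single biconditional: $f(I) \leqslant D$ says that every pair $\{a,b\}$ co-located in some $\Sigma_j$ of $D$ lies in a common maximal clique of $D_I$, i.e.\ is dependent in $I$; equivalently, every $I$-independent pair has disjoint locations in $D$, which is exactly $I \subseteq g(D)$. Finally, to upgrade to a Galois \emph{insertion}, I would verify $g \circ f = \mathrm{id}_{\Ind_\Sigma}$: by construction, two distinct elements lie in a common maximal clique of $D_I$ iff they are $D_I$-related, so $g(f(I))$ recovers the original independence relation on the nose. The other composite $f \circ g$ is only a closure operator, reflecting the fact that distributions may contain non-maximal or redundant components and are only unique up to this closure.

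The main obstacle is keeping the two opposing orderings straight: $\Ind_\Sigma$ is ordered by inclusion of independence (more pairs commute is bigger), while the order on $\Dist_\Sigma$ is defined in terms of co-location and behaves contravariantly with respect to dependence. Once the direction is pinned down, the rest is routine pair-by-pair verification. A secondary point to handle cleanly is the permutation quotient on $\Dist_\Sigma$: one must observe that maximal cliques are canonically determined by $I$ as a \emph{set}, so $f$ is well-defined on the quotient without any further choice.
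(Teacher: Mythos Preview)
Your proposal is correct and follows essentially the same approach as the paper: the same maximal-clique map $f$ (the paper's $i$) and disjoint-location map $g$ (the paper's $r$), the same monotonicity arguments, and the same verification that $g \circ f = \mathrm{id}$. If anything, your write-up is more complete: you explicitly verify the adjunction inequality $f(I) \leqslant D \iff I \subseteq g(D)$, which the paper leaves implicit, and you spell out why two elements share a maximal clique of $D_I$ iff they are $D_I$-related, which is the key fact underlying both the adjunction and the retraction.
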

\begin{proof}
  We construct an injective monotone function $i : \Ind_Σ \to \Dist_Σ$. Let an \independenceRelation{} $I$ over $Σ$ be given, with induced dependence relation $D_I$. Construct the undirected \emph{dependency graph}: vertices are elements of $\Sigma$ and there is an edge $(a,b)$ for every $(a,b) \in D_I$. Choose an ordering of maximal cliques of $D_I$, and define a \distributedAlphabet{} by taking $Σ_i$ to be the elements of $Σ$ in the maximal clique $i$. Different orderings give the same \distribution{} up to permutation, and so the same element of $\Dist_Σ$. This is injective since distinct \independenceRelations{} induce distinct dependency graphs. It is monotone since if $I \subseteq I'$ then the dependency graph $D_I$ is at least as connected as $D_{I'}$, so if $a,b$ both belong to a maximal clique of $D_{I'}$ then they will both belong to a maximal clique of $D_I$.

  We construct a monotone function $r : \Dist_Σ \to \Ind_Σ$. Let $(Σ_1, ..., Σ_k)$ be a \distribution{}. Define a relation $I$ by $(a, b) \in I \iff \textsf{loc}(a) \cap \textsf{loc}(b) = \varnothing$. This is irreflexive and symmetric, and so an \independenceRelation{}. $r$ is also clearly well-defined and monotone. Finally it is easy to check that $r ∘ i : \Ind_Σ \to \Ind_Σ$ is the identity. %

\end{proof}

Put otherwise, though the same \independenceRelation{} may be induced by many different \distributions{}, \independenceRelations{} correspond bijectively with the \distributions{} in the image of $i ∘ r$, that is, the \distributions{} obtained via the maximal clique construction.

\subsection{Symmetric monoidal languages over monoidal distributed alphabets}

We now turn to the interpretation of these notions in terms of \symmetricMonoidalLanguages{}. A \distribution{} can be seen as a \monoidalGraph{} in which sorts are the locations (runtimes). %

\begin{definition} \label{defn:mondist} \defining{linkmondist}
  A monoidal distributed alphabet is a finite \monoidalGraph{} $Γ$ with the following properties:
  \begin{itemize}
  \item $Γ$ has set of sorts a finite ordinal $S_Γ = \{1 < 2 < ... < k\}$ for $k \geqslant 1$,
  \item sorts $i \in S_Γ$ appear in order in the sources and targets of each generator $γ \in B_Γ$,
  \item each sort $i \in S_Γ$ appears at most once in each source and target,
  \item for each generator $γ \in B_Γ$, the sources and targets are non-empty and equal: $s(γ) = t(γ)$. %
  \end{itemize}
\end{definition}

In brief, every generator in the alphabet is equipped with some set of runtimes, which serve as its source and target, and the runtimes are conserved. \Cref{fig:dist-example} gives an example.
\begin{figure}[H]
  \centering
  \includegraphics[scale=0.6]{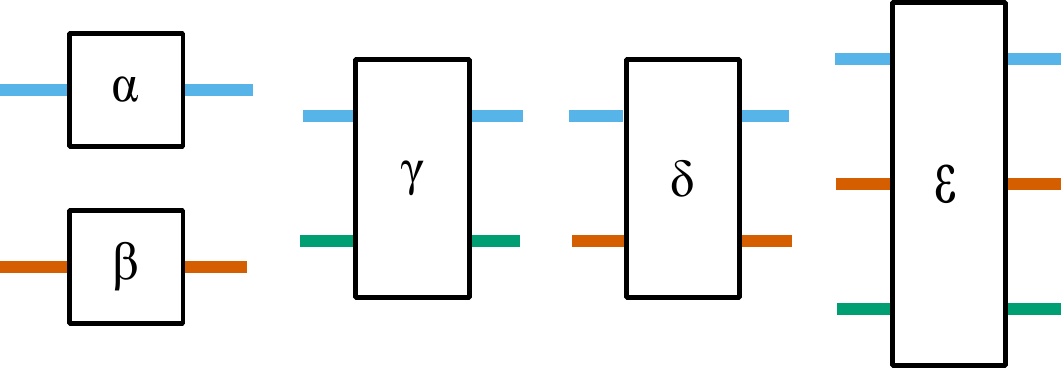}
  \caption{An example of a \monoidalDistributedAlphabet{}. For example, $δ$ and $\beta$ are independent but $γ$ and $α$ are not. We use colours for clarity, here blue = 1 < red = 2 < green = 3.} \label{fig:dist-example}
\end{figure} \vspace*{-2mm}

This gives us a way of representing \distributions{} as \monoidalGraphs{} and vice-versa, if the graph is a \monoidalDistributedAlphabet{}. Following \Cref{prop:loc}, we will use $\textsf{loc}(\genr{})$ to mean the arity (= coarity) of a generator $\genr{}$. Since we choose a finite ordinal for the sorts, we have that:

\begin{proposition} \label{prop:dists}
  \DistributedAlphabets{} are in bijection with \monoidalDistributedAlphabets{}.
\end{proposition}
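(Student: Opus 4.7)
The plan is to exhibit explicit mutually inverse maps between the two classes, essentially by unpacking the definitions and using \Cref{prop:loc} to pass between distributions and their associated location functions.

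For the forward direction, given a distribution $(\Sigma_1, \ldots, \Sigma_k)$ of $\Sigma$ with associated location function $\textsf{loc} : \Sigma \to \mathscr{P}^{+}(\{1,\ldots,k\})$, I would construct the monoidal graph $\Gamma$ with $S_\Gamma = \{1 < 2 < \cdots < k\}$, $B_\Gamma = \Sigma$, and, for each $\sigma \in \Sigma$, source and target equal to the word listing the elements of $\textsf{loc}(\sigma)$ in increasing order. The four conditions of \Cref{defn:mondist} are then immediate: the sort set is a finite ordinal by construction; sorts appear in increasing order by construction; each sort appears at most once because $\textsf{loc}(\sigma)$ is a set; and source equals target and is non-empty because $\textsf{loc}(\sigma) \in \mathscr{P}^{+}(\{1,\ldots,k\})$.

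For the reverse direction, given a monoidal distributed alphabet $\Gamma$ with $S_\Gamma = \{1 < \cdots < k\}$, I would set $\Sigma := B_\Gamma$ and, for each $i \in S_\Gamma$, define $\Sigma_i := \{\gamma \in B_\Gamma \mid i \text{ appears in } s(\gamma)\}$. Then $\bigcup_i \Sigma_i = \Sigma$ because every generator has non-empty source by \Cref{defn:mondist}, and each $\Sigma_i$ is non-empty under the standing assumption that every sort is used by some generator (the natural reading of \Cref{defn:mondist}; otherwise one restricts to the ordinal of sorts that occur). This yields a distribution in the sense of \Cref{defn:dist}.

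Finally, I would check the round-trips. Starting from a distribution and going forward then backward recovers it because the word constructed in the forward map lists exactly the sorts in $\textsf{loc}(\sigma)$, so $\gamma \in \Sigma_i$ iff $i$ appears in $s(\gamma)$ iff $i \in \textsf{loc}(\gamma)$ iff $\gamma \in \Sigma_i$ originally; by the $\textsf{loc}$ correspondence of \Cref{prop:loc}, this determines the tuple. Going the other way, a monoidal distributed alphabet is recovered because the conditions that sorts appear in increasing order without repetition force the source (and target) of each generator to be the unique word encoding the set of sorts occurring in it, which is exactly what the forward map reproduces from the $\Sigma_i$. The main, and only, obstacle is bookkeeping around the ``each sort appears'' condition that aligns non-emptiness of the $\Sigma_i$ with surjectivity of the map from generators onto the ordinal of sorts; otherwise this is essentially a definitional unfolding.
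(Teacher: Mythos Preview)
Your proposal is correct and is essentially what the paper has in mind: the paper gives no explicit proof, treating the proposition as immediate from the definitions together with \Cref{prop:loc}, and your argument is precisely the definitional unfolding that justifies this. The bookkeeping subtlety you flag---that the non-emptiness of each $\Sigma_i$ in \Cref{defn:dist} should correspond to every sort of $\Gamma$ actually occurring in some generator---is a genuine point the paper glosses over, and your handling of it (read \Cref{defn:mondist} as implicitly requiring this, or restrict to the sorts that occur) is the natural fix.
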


Since the \emph{ordering} of the runtimes is ultimately not relevant to the structure of a trace, we should allow them to freely cross each other in our string diagrams: this is precisely what is enabled by taking the \emph{symmetric} monoidal languages over these alphabets. We also need each runtime to appear once in each element of these languages, so we take the boundaries to be $1 ⊗ ... ⊗ n$, which we will write as $\substack{1 \\ \svdots \\ n}$.

\begin{definition} \defining{linkmontracelang}
A \emph{monoidal trace language} is a \symmetricMonoidalLanguage{} of the form $L \subseteq  \traceEndo{Γ}$ where $Γ$ is a \monoidalDistributedAlphabet{}.
\end{definition}

\Cref{fig:trace-example} gives an example of an element in a \monoidalTraceLanguage{} over the \monoidalDistributedAlphabet{} in \Cref{fig:dist-example}. We call such morphisms \emph{monoidal traces}, and indeed we shall see below that they are exactly \MazurkiewiczTraces{}. The corresponding string diagram gives an intuitive representation of traces as topological objects.%

\begin{figure}[H]
  \centering
  \includegraphics[scale=0.6]{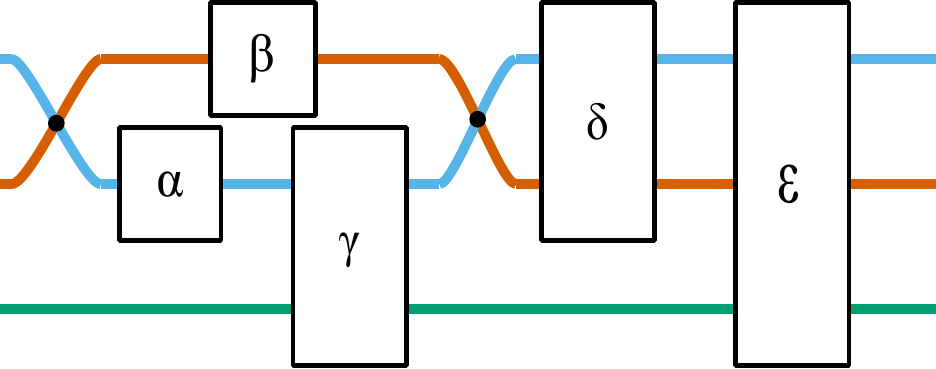}
  \caption{An example of a monoidal trace. $\beta$ is independent of $\alpha$ and $\gamma$, but not $\delta$ or $\epsilon$. Thus $αγβδε$ and $βαγδε$ are two possible serializations of this trace, corresponding to sliding $β$ past $α$ and $γ$ in the string diagram. We use colours for sorts, blue = 1 < red = 2 < green = 3.} \label{fig:trace-example} %
\end{figure} \vspace*{-2mm}

We now show that \monoidalTraceLanguages{} correspond precisely to \MazurkiewiczTraceLanguages{} (\Cref{thm:monoidaltrace}), by establishing an isomorphism of monoids between \traceMonoids{} and monoids of \stringDiagrams{} generated by \monoidalDistributedAlphabets{}. Fix a \monoidalDistributedAlphabet{} $Γ$. Recall that endomorphism hom-sets in a category are monoids under composition, and that the hom-set $\traceEndo{Γ}$ has elements string diagrams $\substack{1 \\ \svdots \\ n} → \substack{1 \\ \svdots \\ n}$ over $Γ$.

\begin{lemma} \label{prop:endo-pres}
  The hom-set $\traceEndo{Γ}$ admits the following presentation as a monoid:
  \begin{itemize}
  \item Generators: For each $\genr{} \in Γ$, the string diagram $N(γ) : 1 ⊗ ... ⊗ n → 1 ⊗ ... ⊗ n$ built from symmetries, followed by $\genr{}$ tensored with identities, followed by the inverse symmetry. See \Cref{fig:trace-generator} for an example.
  \item Equations: $N(α)\comp N(β) = N(β)\comp N(α) \iff \textsf{loc}(\genr[α]) \cap \textsf{loc}(\genr[β]) = \varnothing$, where $\comp$ denotes composition of string diagrams in diagrammatic (left-to-right) order.
  \end{itemize}
\end{lemma}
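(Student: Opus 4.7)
The plan is to prove the presentation in three parts: generation, soundness of the commutation relations, and completeness.

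For generation, I would induct on the number of generators appearing in a string diagram representing $f \in \traceEndo{Γ}$. In the base case, a diagram with no generators uses only symmetries and identities; since these preserve sorts and each sort appears exactly once on each boundary $\substack{1 \\ \svdots \\ n}$, the induced permutation on coloured wires is forced to be the identity, so $f$ equals the identity morphism, that is, the empty product of $N(γ)$'s. For the inductive step on a diagram with $k > 0$ generators, I would choose a topmost generator $γ$ (one with no other generator above it) and use naturality of symmetries (\Cref{fig:interchange}, right) to slide $γ$ into the top layer. Matching the resulting braidings around $γ$ to the canonical ones defining $N(γ)$, by absorbing the discrepancy into the diagram below, yields a decomposition $f = N(γ) \comp D'$ in which $D'$ has $k-1$ generators, and the inductive hypothesis closes the case.

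For soundness, suppose $\textsf{loc}(α) \cap \textsf{loc}(β) = \varnothing$. Then $N(α)$ and $N(β)$ act non-trivially on disjoint sets of wires, so by interchange (\Cref{fig:interchange}, left) and naturality of symmetries, both $N(α) \comp N(β)$ and $N(β) \comp N(α)$ reduce to the same string diagram: braid the wires of $\textsf{loc}(α) \cup \textsf{loc}(β)$ together, apply $α \otimes β = β \otimes α$, and braid back.

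Completeness is the main obstacle. My strategy uses a projection invariant: for each sort $i \in S_Γ$, tracing the $i$-coloured wire from top to bottom through a representative diagram yields a sequence $σ_i$ consisting of those generators $γ$ with $i \in \textsf{loc}(γ)$. By coherence of string diagrams in free symmetric monoidal categories \cite{joyal91}, combined with the colour-preserving, each-sort-appears-once structure forced by our boundary choice, two morphisms of $\traceEndo{Γ}$ coincide iff they yield the same family $(σ_i)_{i \in S_Γ}$. It then suffices to show that two words $w, w'$ in the $N(γ)$'s produce equal families iff they are related by the commutation relations: if $w$ begins with $γ$, then $γ$ occurs in $w'$, and every letter of $w'$ preceding it must have location disjoint from $\textsf{loc}(γ)$ (else the projection onto a shared sort would disagree), so the commutation relations bring $γ$ to the front of $w'$, and induction on length finishes. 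The delicate point is the coherence-based equivalence between morphisms and wire-sequence invariants, which crucially uses the restricted form imposed by the monoidal distributed alphabet structure.
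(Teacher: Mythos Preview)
Your proposal is correct and your generation step is essentially the paper's argument: the paper puts a diagram in \emph{general position}, slices it into one-generator pieces, and normalises each slice's boundaries to obtain a word $N(\gamma_1)\comp\cdots\comp N(\gamma_n)$, which is your induction phrased globally. Soundness is handled identically in both.

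The genuine difference is in completeness. The paper simply asserts that any two slicings of the same diagram differ by repeated interchange of adjacent generators, and that such an interchange is possible iff the locations are disjoint. You instead introduce the per-wire projection invariant $(\sigma_i)_{i\in S_\Gamma}$ and argue (i) it is a complete invariant of morphisms in $\traceEndo{\Gamma}$, via Joyal--Street coherence plus the one-wire-per-sort constraint, and (ii) two words in the $N(\gamma)$'s share this invariant iff they are related by the commutation relations. This is a cleaner and more explicit route: it separates the topological fact about string diagrams from the purely combinatorial fact about words, and the latter is exactly the standard cancellation argument for trace monoids. The paper's version is terser but leaves the ``any other sequence is obtainable by interchange'' step unargued; your invariant makes that step checkable. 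One cosmetic point: the paper composes left-to-right, so your ``topmost / no generator above'' should read ``leftmost / no generator to its left'' to match the conventions in play.
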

\begin{proof}
  We construct an isomorphism between the monoids. Let $s \in \traceEndo{Γ}$ be a string diagram. We can use interchange (\Cref{fig:interchange}) to impose a linear order of generators from left to right in the diagram, e.g. $\genrw[γ_1], ..., \genrw[γ_n]$. This is called putting $s$ in \emph{general position}, by perturbing generators at the same horizontal position \cite{joyal91}. We then split the string diagram into a sequence of slices, each containing one generator. For a slice with right (or left) boundary $\stk{k_1}{k_n}$, we can use the permutation $\stk{k_1}{k_n} \to \stk{1}{n}$ followed by its inverse (or vice-versa) to finally obtain $s$ as a sequence $N(γ_1)\comp ... \comp N(γ_n)$. Any other possible sequence of generators is obtainable by repeatedly interchanging generators: this is possible if and only if their locations are disjoint. Consequently, this defines a function from $\traceEndo{Γ}$ to the monoid presented above. Given that, as argued above, the slicing construction is unique up to interchanging independent generators, this defines a homomorphism. Conversely, given a generator $N(γ)$ in the presentation, we map this to the same string diagram in $\traceEndo{Γ}$. Again, it follows from interchange that this extends to a homomorphism, inverse to that above.
\end{proof}

\vspace*{-2mm} \begin{figure}[H]
  \centering
  \includegraphics[scale=0.75]{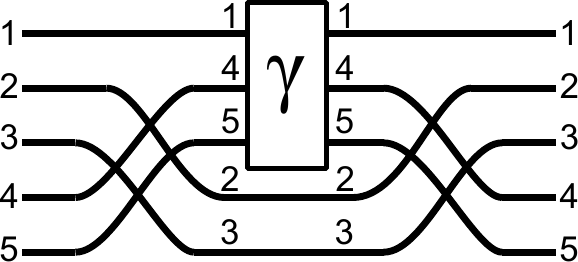}
  \caption{An example of a generator $N(γ)$ as in \Cref{prop:endo-pres}.} \label{fig:trace-generator}
\end{figure} \vspace*{-2mm}

We now show that \traceMonoids{} are isomorphic to the endomorphism monoids $\traceEndo{Γ}$.

\begin{lemma} \label{lemma:monoidaltrace}
  Let $I$ be an \independenceRelation{} on an alphabet $Σ$, and $Γ$ the \monoidalDistributedAlphabet{} induced by the corresponding \distribution{} (\Cref{prop:dists}). Then there is an isomorphism of monoids $\mathcal{T}(\Sigma, I) \cong \traceEndo{Γ}$.
\end{lemma}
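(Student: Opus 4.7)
The plan is to invoke \Cref{prop:endo-pres} to replace the concrete string-diagrammatic description of $\traceEndo{\Gamma}$ with a presentation that can be compared directly to the standard presentation of the trace monoid. Recall that $\mathcal{T}(\Sigma, I)$ is by definition the quotient $\Sigma^{*}/{\equiv_{I}}$, so it admits the monoid presentation with generators $\Sigma$ and relations $ab = ba$ for each $(a,b) \in I$. By \Cref{prop:endo-pres}, $\traceEndo{\Gamma}$ has generators $N(\gamma)$ for $\gamma \in B_\Gamma$ and relations $N(\alpha) \comp N(\beta) = N(\beta) \comp N(\alpha)$ whenever $\textsf{loc}(\alpha) \cap \textsf{loc}(\beta) = \varnothing$. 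Given these two presentations, the lemma reduces to matching generators with generators and relations with relations.

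First I would establish the bijection on generators. The \distribution{} $(\Sigma_1,\dots,\Sigma_k)$ induced by $I$ (via the maximal-clique construction from the proof of \Cref{prop:subposet}) gives, by \Cref{prop:dists}, a \monoidalDistributedAlphabet{} $\Gamma$ whose set of boxes $B_\Gamma$ is in canonical bijection with $\Sigma$: each letter $a \in \Sigma$ corresponds to the unique generator $\gamma_a \in B_\Gamma$ whose arity (= coarity) lists precisely those $i$ with $a \in \Sigma_i$, that is, $\textsf{loc}(\gamma_a) = \textsf{loc}(a)$.

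The key step is then to verify that the relations correspond under this bijection: $(a,b) \in I$ if and only if $\textsf{loc}(\gamma_a) \cap \textsf{loc}(\gamma_b) = \varnothing$. This is exactly the content of the identity $r \circ i = \mathrm{id}_{\Ind_\Sigma}$ established in \Cref{prop:subposet}: starting from $I$, forming the dependency graph, taking the maximal cliques as locations, and then declaring two letters independent iff they share no location, returns $I$. Under this bijection, each defining relation $ab = ba$ of $\mathcal{T}(\Sigma, I)$ matches exactly a defining relation $N(\gamma_a) \comp N(\gamma_b) = N(\gamma_b) \comp N(\gamma_a)$ of $\traceEndo{\Gamma}$, and vice versa.

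Having aligned the presentations, I would conclude formally by invoking the universal property of presented monoids: the assignment $a \mapsto N(\gamma_a)$ extends to a monoid homomorphism $\Sigma^{*} \to \traceEndo{\Gamma}$ that sends every defining equation of $\equiv_I$ to an equation holding in $\traceEndo{\Gamma}$, hence descends to a homomorphism from $\mathcal{T}(\Sigma, I)$. Symmetrically, $N(\gamma_a) \mapsto a$ extends to a homomorphism in the opposite direction. The two composites act as the identity on generators in each presentation, so by uniqueness they are mutually inverse. I expect no substantive obstacle here: the only thing that needs genuine care is the bookkeeping that $\textsf{loc}$ in the sense of \Cref{prop:loc} coincides with $\textsf{loc}$ in the sense of arities in $\Gamma$, which is immediate from \Cref{defn:mondist} and \Cref{prop:dists}.
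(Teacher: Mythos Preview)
Your proposal is correct and follows essentially the same route as the paper: both invoke \Cref{prop:endo-pres} to obtain a presentation of $\traceEndo{\Gamma}$, construct mutually inverse homomorphisms by matching generators $a \leftrightarrow N(\gamma_a)$, and verify that the defining relations correspond. If anything, you are slightly more explicit than the paper in justifying the key equivalence $(a,b)\in I \iff \textsf{loc}(\gamma_a)\cap\textsf{loc}(\gamma_b)=\varnothing$ by pointing to $r\circ i = \mathrm{id}$ from \Cref{prop:subposet}, which the paper uses implicitly.
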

\begin{proof}
  We use the presentation of the endomorphism monoid given in \Cref{prop:endo-pres}. Define a homomorphism $\alpha : \traceEndo{Γ} \to \mathcal{T}(Σ,I)$ by mapping generators $N(γ) ↦ [γ]$. Let $N(γ)\comp N(γ') = N(γ') \comp N(γ)$, then it follows $[γγ'] = [γ'γ]$ in $\mathcal{T}(Σ,I)$, since the former holds iff $\loc{γ} ∩ \loc{γ'} = \varnothing$, and so this extends to a homomorphism.
 Define a homomorphism $β : \mathcal{T}(\Sigma, I) → \traceEndo{Γ}$ by mapping generators $[γ] ↦ N(γ)$. $[γγ'] = [γ'γ$] holds iff $\loc{γ} ∩ \loc{γ'} = \varnothing$, iff $\loc{\genr[γ]} ∩ \loc{\genrw[γ']} = \varnothing$, iff $N(γ) \comp N(γ') = N(γ') \comp N(γ)$. Finally it is clear that $α$ and $β$ are inverses, and so witness an isomorphism of monoids. 
\end{proof}

The following theorem is now immediate: given a \monoidalTraceLanguage{} $L \subseteq \traceEndo{Γ}$ we obtain a trace language $L' \subseteq \mathcal{T}(Σ,I)$ using the isomorphism, and vice-versa:

\begin{theorem} \label{thm:monoidaltrace}
  \MonoidalTraceLanguages{} are exactly \MazurkiewiczTraceLanguages{}.
\end{theorem}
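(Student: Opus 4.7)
The plan is to derive the theorem directly from the monoid isomorphism of \Cref{lemma:monoidaltrace}, by transporting languages (i.e., subsets) across it in both directions. Since a language in either setting is nothing more than a subset of a certain monoid, and we have a canonical isomorphism between these monoids, once the correspondence of presenting data (pairs $(\Sigma,I)$ versus \monoidalDistributedAlphabets{} $\Gamma$) is set up, the statement reduces to a tautology about subsets transported along a bijection.

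First, I would spell out the correspondence between the two pieces of data. For one direction, given a Mazurkiewicz pair $(\Sigma,I)$, the maximal clique construction in the proof of \Cref{prop:subposet} yields a \distribution{} of $\Sigma$, which by \Cref{prop:dists} is the same thing as a \monoidalDistributedAlphabet{} $\Gamma$; \Cref{lemma:monoidaltrace} then gives the isomorphism $\mathcal{T}(\Sigma,I) \cong \traceEndo{\Gamma}$. For the other direction, given a \monoidalDistributedAlphabet{} $\Gamma$, set $\Sigma = B_\Gamma$ and define $I$ by $(\gamma,\gamma') \in I \iff \loc{\gamma} \cap \loc{\gamma'} = \varnothing$; this is the map $r$ from \Cref{prop:subposet}, and the same argument as \Cref{lemma:monoidaltrace} applies verbatim, since that proof only uses the condition that independence of generators is characterized by disjointness of locations, and not that $\Gamma$ arose from the maximal clique construction.

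With the monoid isomorphism $\varphi : \mathcal{T}(\Sigma,I) \xrightarrow{\sim} \traceEndo{\Gamma}$ fixed, the proof is then the observation that $L \mapsto \varphi(L)$ and $L' \mapsto \varphi^{-1}(L')$ are mutually inverse bijections between the power sets. Since a \monoidalTraceLanguage{} is by \Cref{defn:tracemonoid} exactly a subset of $\mathcal{T}(\Sigma,I)$ and a monoidal trace language is by definition a subset of $\traceEndo{\Gamma}$, this identifies the two classes of languages exactly, as required.

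No step here is genuinely difficult; the substantive content was already in \Cref{prop:endo-pres} and \Cref{lemma:monoidaltrace}. The only point requiring care is making sure that the correspondence of data is used symmetrically — namely, that \Cref{lemma:monoidaltrace} applies equally well when starting from an arbitrary \monoidalDistributedAlphabet{} rather than one produced by the maximal clique construction — and this is the mild obstacle to flag. Once this is observed, the theorem is, as the authors indicate, immediate.
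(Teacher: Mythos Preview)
Your proposal is correct and follows exactly the paper's own route: the paper states the theorem as ``immediate'' from \Cref{lemma:monoidaltrace}, transporting subsets across the monoid isomorphism in both directions. You are simply more explicit than the paper about the reverse direction (starting from an arbitrary \monoidalDistributedAlphabet{} $\Gamma$ rather than one produced by maximal cliques), and your observation that the proof of \Cref{lemma:monoidaltrace} goes through unchanged in that case---since it rests only on \Cref{prop:endo-pres} and the characterization of independence via disjoint locations---is correct and fills a small gap the paper leaves implicit in its ``vice-versa''.
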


\Cref{lemma:monoidaltrace} also shows that composition of traces corresponds simply to concatenation of the corresponding monoidal traces. Diagrams like \Cref{fig:trace-example} are commonplace in the trace literature \cite{10.1007/978-3-642-32621-9_5,zielonka}. \Cref{thm:monoidaltrace} gives a formal basis for these diagrams as elements of \symmetricMonoidalLanguages{}.

\begin{remark} \label{rem:effectful}
  The idea of monoidal categories with a runtime is made precise by string diagrams for the \emph{effectful categories} of Román \cite{effectful22}. Free props over \monoidalDistributedAlphabets{}, considered as monoidal categories with \emph{multiple runtimes} suggest a further generalization of effectful categories, sketched as a setting for concurrency by Jeffrey \cite[Section 9.4]{jeffrey2}. We return to this in \Cref{sec:serial}, where effectful (premonoidal) categories will be used to equip a trace language with a new runtime that enforces a strict ordering of events.
\end{remark}

\section{Symmetric Monoidal Automata} \label{sec:automata}
\emph{Monoidal automata} give an alternative specification of the class of regular monoidal languages: they are analogues of finite-state automata in which transitions have multiple inputs and multiple outputs. Our paper \cite{earnshaw22} introduced monoidal automata for single-sorted, planar monoidal languages. However, the same data specifies an acceptor for single-sorted \symmetricMonoidalLanguages{}, if we inductively extend to \emph{props}, rather than planar monoidal categories.

In this section we introduce monoidal automata over \emph{multi-sorted} \monoidalGraphs{} and show how these recognize (multi-sorted) \symmetricMonoidalLanguages{}. In \Cref{sec:asynch}, we will see that the asynchronous automata of Zielonka \cite{zielonka} are a natural class of symmetric monoidal automata: those over \monoidalDistributedAlphabets{}.

\begin{definition} \label{defn:ndmonaut} \defining{linkndsmonaut}
  A non-deterministic monoidal semi-automaton is:
  \begin{itemize}
  \item an input alphabet, given by a finite \monoidalGraph{} $\Gamma$,
  \item an family of non-empty, finite state sets $\{Q_c\}_{c \in S_Γ}$ indexed by the sorts of $Γ$,
  \item for each $\gamma : c_1...c_n \to  c_1'...c_m'$ in $\Gamma$, a transition function $\Delta_\gamma : \prod_{i=0}^n Q_{c_i} \to \mathscr{P}(\prod_{j=0}^m Q_{c_j'})$.
  \end{itemize}
\end{definition}

As noted in \Cref{sec:languages}, there are several candidates for a notion of initial/final state. In the following, we take initial and final words $i, f$ over $\prod Q_c$. A monoidal semi-automaton equipped with initial and final words turns it into a (non-deterministic) \emph{monoidal automaton}.\defining{linkndmonaut}

For classical NFAs, the assignment $a \mapsto \Delta_a$ extends uniquely to a functor $\Sigma^{*} \to \textsf{Rel}$, the inductive extension of the transition structure from letters to words. We can similarly extend monoidal automata to string diagrams. First, we define the codomain prop, $\Rel[Γ][Q]$:

\begin{definition}
  For a family of sets $\{Q_c\}_{c \in S_Γ}$ indexed by the sorts of $Γ$ then $\Rel[Γ][Q]$ is the prop with:
  \begin{itemize}
  \item set of objects $S_Γ^{*}$,
  \item morphisms $c_1...c_n \to c_1'...c_m'$ functions $\prod_{i=1}^n Q_{c_i} \to \mathscr{P}(\prod_{j=1}^m Q_{c_j})$,
  \item composition is the usual composition of relations, i.e. $f ∘ g := μ ∘ \mathscr{P}(g) ∘ f$, where $μ$ is the canonical map from sets of subsets to subsets,
  \item $⊗$ is given on objects by concatenation,
  \item and on morphisms $f : \bigotimes_i {c_i} \to \bigotimes_j c'_j$ and $g : \bigotimes_k {d_k} \to \bigotimes_l d'_l$ by $f ⊗ g := \nabla \circ (f \times g)$, where $\nabla$ sends pairs of subsets to their cartesian product,
  \item symmetries $σ : c_1c_2 \to c_2c_1$ are functions $Q_{c_1} \times  Q_{c_2} \to \mathscr{P}(Q_{c_2} \times Q_{c_1}) : (q,q') ↦ \{(q', q)\}$.
  \end{itemize}
\end{definition}

Note that a non-deterministic \monoidalSemiAutomaton{} amounts to a morphism of \monoidalGraphs{} $\Gamma \to \mathscr{U}\Rel[Γ][Q]$. The adjunction $\Fprop{} \dashv \mathscr{U}$ implies that there is a unique extension to a strict monoidal functor $\Fprop{Γ} → \Rel[Γ][Q]$, which we call a non-deterministic \emph{symmetric} monoidal semi-automaton\defining{linksymmonaut}. This functor maps a string diagram to a relation. When this relation relates the initial word to the final word, the string diagram is \emph{accepted}:

\begin{definition}
  Let $\Delta : \Fprop{\Gamma} \to \Rel[Γ][Q]$ be a non-deterministic monoidal automaton with initial and final states $i,f \in S_Γ^{*}$. Then the symmetric monoidal language accepted by $\Delta$ is the set of morphisms $\Lang(\Delta) := \{ \alpha \in \Fprop{\Gamma} \mid f \in \Delta(\alpha)(i)\}$.
\end{definition}

Intuitively, a run of a symmetric monoidal automaton starts with a \emph{word} of states, whose subwords are modified by transitions corresponding to generators. Identity wires do not modify the states, and symmetries permute adjacent states.

\begin{observation}
  There is an evident correspondence between non-deterministic \monoidalAutomata{} and \regularMonoidalGrammars{}. The graphical representation of a grammar (such as \Cref{fig:circuit}) makes this most clear: it can also be thought of as the ``transition graph'' of a non-deterministic monoidal automaton. %
\end{observation}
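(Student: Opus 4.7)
The plan is to describe explicit, mutually-inverse constructions relating \regularMonoidalGrammars{} and non-deterministic \monoidalAutomata{} over the same alphabet $\Gamma$, and then to verify that the \symmetricMonoidalLanguage{} determined by a grammar via $\Fprop{\phi}$ (cf.\ \Cref{defn:regmonlang}) agrees with the language accepted by its associated automaton.

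In one direction, given a grammar $\phi : M \to \Gamma$ with initial and final words $i, f \in S_M^{*}$, I would set $Q_c := S_\phi^{-1}(c) \subseteq S_M$ for each $c \in S_\Gamma$, and for each $\gamma \in B_\Gamma$ with $s(\gamma) = c_1 \cdots c_n$ and $t(\gamma) = c'_1 \cdots c'_m$, define
\[
\Delta_\gamma(q_1, \ldots, q_n) := \{(q'_1, \ldots, q'_m) \mid \exists\, b \in B_\phi^{-1}(\gamma),\ s(b) = q_1 \cdots q_n,\ t(b) = q'_1 \cdots q'_m\},
\]
with $i, f$ reused as initial and final states (viewed inside $(\bigsqcup_c Q_c)^{*}$). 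In the other direction, given an automaton, I would build a monoidal graph $M$ with $S_M := \bigsqcup_c Q_c$ and a box $b_{\gamma, \vec q, \vec q'} : \vec q \to \vec q'$ for every $\gamma \in B_\Gamma$ and every $\vec q' \in \Delta_\gamma(\vec q)$, with $\phi$ mapping each such box to $\gamma$ and each $q \in Q_c$ to $c$. Checking the compatibility conditions of \Cref{defn:mgmorphism} and that the two constructions are mutually inverse (on the nose, up to relabelling of boxes) is routine.

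To verify that the languages agree, I would argue by structural induction on string diagrams: using that $\Delta : \Fprop{\Gamma} \to \Rel[\Gamma][Q]$ is the unique extension provided by $\Fprop{} \dashv \mathscr{U}$, one shows that $f \in \Delta(\alpha)(i)$ iff there exists a lift $\tilde\alpha \in \Fprop{M}(i,f)$ with $\Fprop{\phi}(\tilde\alpha) = \alpha$. The base cases (generators, identity wires, symmetries) are immediate from the definitions of $\Delta_\gamma$ and of the prop $\Rel[\Gamma][Q]$; the inductive cases follow from functoriality and strict monoidality of $\Fprop{\phi}$. The main obstacle is essentially bookkeeping: ensuring that the set-valued, non-deterministic transition $\Delta_\gamma$ corresponds precisely to the collection of distinct boxes in $M$ sharing an image under $\phi$, so that multiple lifts of the same $\alpha \in \Fprop{\Gamma}$ match multiple accepting runs of the automaton.
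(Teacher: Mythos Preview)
The paper does not actually prove this statement: it is recorded as an \emph{observation}, with the only justification being the remark that the graphical representation of a grammar (as in \Cref{fig:circuit}) can equally be read as the transition graph of a non-deterministic monoidal automaton. Your proposal therefore goes considerably further than the paper, supplying explicit back-and-forth constructions and a language-equivalence argument by structural induction through the adjunction $\Fprop{} \dashv \mathscr{U}$. That extra work is sound in spirit and would constitute a genuine proof where the paper offers none.

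Two small points are worth tightening. First, \Cref{defn:ndmonaut} requires each $Q_c$ to be non-empty, whereas your $Q_c := S_\phi^{-1}(c)$ may be empty if some sort of $\Gamma$ is not hit by $S_\phi$; you would need to add a dummy state in that case (this is harmless for the language). Second, your claim that the two constructions are ``mutually inverse (on the nose, up to relabelling of boxes)'' is slightly too strong: distinct boxes $b,b' \in B_M$ with the same source, target, and $\phi$-image collapse to a single element of the set $\Delta_\gamma(\vec q)$, and the return trip produces only one box. So the round-trip identifies grammars with such redundant parallel boxes. This does not affect the induced \symmetricMonoidalLanguage{}, so the correspondence you really establish is at the level of languages (which is all the observation asserts), rather than a bijection of presentations.
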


\begin{remark} \label{rem:monad}
  We can further abstract our definition of monoidal automaton by noting that $\Rel[Γ][Q]$ is a sub-prop of the Kleisli category of the powerset monad $\mathscr{P}$, and that this monad could be replaced by another commutative monad \cite[Corollary 4.3]{power_robinson_1997}. For example, replacing $\mathscr{P}$ with the maybe monad, we obtain deterministic monoidal automata.%
\end{remark}

\section{Asynchronous Automata as Symmetric Monoidal Automata} \label{sec:asynch}
Asynchronous automata were introduced by Zielonka \cite{zielonka} as a true-concurrent operational model of recognizable trace languages, a well-behaved subclass of \traceLanguages{} analogous to regular languages. In this section we show they are precisely \symmetricMonoidalAutomata{} over \monoidalDistributedAlphabets{}, which leads to the following theorem: %

\begin{theorem} \label{thm:rec}
  Recognizable trace languages are exactly \regularSymmetricMonoidalLanguages{} over \monoidalDistributedAlphabets{}.
\end{theorem}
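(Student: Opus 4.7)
The plan is to derive the theorem from three ingredients: the bijection between \independenceRelations{} and \monoidalDistributedAlphabets{} underlying \Cref{thm:monoidaltrace}; a data- and semantics-preserving match between asynchronous automata and \symmetricMonoidalAutomata{} over \monoidalDistributedAlphabets{}; and Zielonka's theorem \cite{zielonka}, which identifies recognizable \traceLanguages{} with those accepted by asynchronous automata.

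First, I would unfold the definition of an asynchronous automaton over a \distribution{} $(\Sigma_1,\ldots,\Sigma_k)$ of $\Sigma$: a family of local state sets $(Q_i)_{i=1}^k$, for each $a \in \Sigma$ a (non-deterministic) transition relation $\delta_a \subseteq \prod_{i \in \loc{a}} Q_i \times \prod_{i \in \loc{a}} Q_i$ acting only on the locations of $a$, and an initial and a set of accepting global states in $\prod_{i=1}^k Q_i$. By \Cref{prop:dists}, such a distribution is equivalently a \monoidalDistributedAlphabet{} $\Gamma$ in which every generator $\gamma$ has $s(\gamma)=t(\gamma)$ equal to the ordered tuple of its locations. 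Comparing with \Cref{defn:ndmonaut}, the data of an asynchronous automaton over $(\Sigma,I)$ is then precisely the data of a \symmetricMonoidalAutomaton{} over the associated $\Gamma$: sorts correspond to locations, the state families agree, transition functions have matching domain and codomain, and initial/accepting global states correspond to single-word initial/final conditions.

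Next, I would check that accepted languages coincide under this identification, transported along the isomorphism $\mathcal{T}(\Sigma,I) \cong \traceEndo{\Gamma}$ of \Cref{lemma:monoidaltrace}. Given $w = a_1 \cdots a_n$, the asynchronous automaton accepts $w$ iff the composite of the local transitions $\delta_{a_1},\ldots,\delta_{a_n}$, each extended by identity on idle locations, takes the initial global state to an accepting one. On the monoidal side, \Cref{prop:endo-pres} expresses $[w] \in \traceEndo{\Gamma}$ as $N(a_1) \comp \cdots \comp N(a_n)$, and functoriality of $\Delta : \Fprop{\Gamma} \to \Rel[\Gamma][Q]$ sends this composite to the corresponding composite of relations: the symmetries inside each $N(a_j)$ merely permute components of the state word, and the generator acts as $\delta_{a_j}$ on the locations of $a_j$ with identities elsewhere---exactly one local step of an asynchronous automaton. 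The commutativity equations of \Cref{prop:endo-pres} for independent generators become instances of the interchange law in $\Rel[\Gamma][Q]$, so the resulting relation depends only on the trace class $[w]$. Hence $w$ is accepted by the asynchronous automaton iff $[w] \in \Lang(\Delta)$.

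Combining these steps, asynchronous automata and \symmetricMonoidalAutomata{} over \monoidalDistributedAlphabets{} define the same class of languages, modulo the isomorphism of \Cref{thm:monoidaltrace}. Invoking Zielonka's theorem \cite{zielonka}, together with the correspondence between \regularMonoidalGrammars{} and non-deterministic \monoidalAutomata{} noted in \Cref{sec:automata}, the theorem follows in both directions. The main obstacle I expect is the verification in the second step that the $\Fprop{}$-semantics computes asynchronous transitions on the nose: one must carefully account for the symmetries wrapping each $N(\gamma)$, check that tensoring with identities on idle locations corresponds exactly to an asynchronous automaton leaving those locations untouched, and confirm that different slicings of a given monoidal trace (arising from distinct interchange-equivalent general positions) yield the same relation.
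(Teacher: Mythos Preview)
Your proposal is correct and follows essentially the same route as the paper: establish that asynchronous automata and \symmetricMonoidalAutomata{} over \monoidalDistributedAlphabets{} are the same data with the same accepted language (the paper packages this as \Cref{prop:monasync}, using \Cref{lemma:monoidaltrace} just as you do), then invoke Zielonka's theorem together with the grammar/automaton correspondence. The one small point you elide is that an asynchronous automaton carries a \emph{set} $F \subseteq Q$ of accepting global states while the monoidal automata here have a single final word; the paper disposes of this by the standard NFA normalization to a single accepting state before transferring the data.
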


We recall the definition of asynchronous automata, before turning to \monoidalAutomata{}.

\begin{definition}[Asynchronous automaton \cite{zielonka}] \label{defn:asyncaut}
  Let $(\Sigma_1,..., \Sigma_k)$ be a \distribution{} of an alphabet $\Sigma$. For each $1 \leqslant i \leqslant k$, let $Q_i$ be a non-empty finite set of states, and for each $\sigma \in \Sigma$ take a transition relation $\Delta_\sigma : \prod_{i \in \loc{σ}} Q_i \to \mathscr{P}(\prod_{i \in \loc{σ}} Q_i)$. This defines a global transition relation on the set $Q := \prod_{i=1}^k Q_i$ as follows:

$(q_1, ..., q_k) \xrightarrow[]{\sigma} (q_1',...q_k') \iff q_i = q_i'$ for $i \notin \loc{σ}$ and $(q_{i_1}', ... , q_{i_j}') \in \Delta_{\sigma}(q_{i_1}, ..., q_{i_j})$ where $\{i_1, ..., i_j\} \in \loc{σ}$. Finally let $\overrightarrow{i} \in Q, F \subseteq Q$ be initial and final words of states.
\end{definition}

The global transition relation for $σ$ leaves unchanged those states at locations in the complement of $\loc{σ}$, and otherwise acts according to the local transition $Δ_σ$. An asynchronous automaton has a language over $\Sigma$ given by the extension of the transition relation to words. Moreover, asynchronous automata have a language of \MazurkiewiczTraces{} over the distribution of $Σ$: a trace in $\mathcal{T}(Σ,I)$ is accepted when all of its serializations are accepted, which happens when one of its serializations is accepted \cite[p. 109]{zielonka}. \emph{Recognizable trace languages} are defined algebraically as those whose syntactic congruence is of finite index \cite{zielonka}. Zielonka's theorem says that they also have an operational characterization:

\begin{theorem}[Zielonka \cite{zielonka}] \label{thm:zielonka}
  Asynchronous automata accept precisely the \emph{recognizable trace languages}.
\end{theorem}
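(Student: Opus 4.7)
The plan is to prove the two inclusions separately. For the direction that fits naturally into our framework, suppose we have an asynchronous automaton over a \distribution{} $(\Sigma_1, \dots, \Sigma_k)$ with global state set $Q = \prod_{i=1}^k Q_i$. Its global transition relation extends from letters to words to give a monoid homomorphism $\Sigma^{*} \to \mathcal{P}(Q \times Q)$ into the finite monoid of binary relations on $Q$. I would verify that this homomorphism factors through the \traceMonoid{}: whenever $(a,b) \in I$ we have $\loc{a} \cap \loc{b} = \varnothing$, so the global transitions induced by $a$ and $b$ modify disjoint coordinates of the state tuple and therefore commute as relations on $Q$. Hence the accepted trace language is the preimage of a subset of this finite quotient monoid, yielding recognizability.

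For the converse, which is the substantive content of Zielonka's theorem, let $L \subseteq \mathcal{T}(\Sigma, I)$ be recognized by a homomorphism $h : \mathcal{T}(\Sigma, I) \to M$ into a finite monoid with $L = h^{-1}(F)$. Following Zielonka, the strategy is to construct an asynchronous automaton whose global behaviour simulates $h$, but whose update rule for an action $\sigma$ depends only on the local states of the processes in $\loc{\sigma}$. Each process $i$ is to maintain a local state comprising a bounded amount of information about the causal past, in particular \emph{time-stamps} recording, for each pair of processes $(j, \ell)$, what $i$ currently knows about $j$'s most recent knowledge of $\ell$'s activity. When the processes in $J \subseteq \{1, \dots, k\}$ synchronise on an action $\sigma$ with $\loc{\sigma} = J$, they pool and coherently update this information.

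The technical heart of the construction is the \emph{gossip lemma}: for any trace $t$ and any $J$, the processes in $J$ can determine, from their joint local information after executing $t$, which among them holds the most up-to-date secondary information about each process $\ell \notin J$. This suffices to reconstruct the relevant portion of the causal past needed to compute $h$ applied to the currently observed trace, and in particular to update a local copy of the value in $M$. I expect the main obstacle to be showing that the local states can be drawn from a \emph{finite} set: a priori, time-stamps are natural numbers that grow without bound, so one must argue that it is enough to keep them modulo an equivalence that preserves both the gossip information and the action on $M$. Once this is in place, the asynchronous automaton is defined with initial states encoding the unit of $M$ and final global states those from which an element of $F$ can be recovered; a straightforward induction on serializations then shows that a trace $t$ is accepted iff $h(t) \in F$, i.e.\ iff $t \in L$. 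Combined with the easy direction, this proves the stated equality.
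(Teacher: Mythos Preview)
The paper does not prove this theorem at all: it is stated as a cited result from Zielonka \cite{zielonka}, with no accompanying argument. In the paper's architecture, \Cref{thm:zielonka} is imported as a black box and combined with \Cref{prop:monasync} to obtain \Cref{thm:rec}.

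Your proposal, by contrast, is a reasonable high-level sketch of the classical proof. The easy direction (asynchronous automata yield recognizable trace languages via a finite-monoid homomorphism) is correct and complete as stated. For the hard direction you have identified the right ingredients --- time-stamping, the gossip mechanism, and the need to bound the local states --- and correctly flagged the finiteness argument as the crux. That said, this remains a plan rather than a proof: the gossip lemma and the bounded time-stamp construction each require substantial combinatorial work that you have not supplied, so as written it would not constitute a self-contained proof. But since the paper itself offers no proof and simply cites the result, your sketch already goes beyond what the paper does; the appropriate move here is to match the paper and defer to the original reference.
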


\Cref{defn:asyncaut} closely resembles that of \symmetricMonoidalAutomata{}. Indeed, asynchronous automata are precisely \symmetricMonoidalAutomata{} over \monoidalDistributedAlphabets{}:

\begin{proposition} \label{prop:monasync}
  For an asynchronous automaton $𝒜$, there is a \symmetricMonoidalAutomaton{} over a \monoidalDistributedAlphabet{} with the same trace language, and vice-versa.
\end{proposition}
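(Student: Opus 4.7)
The plan is to set up a direct translation between the two classes of automata, using \Cref{prop:dists} to convert between the underlying \distributions{} and \monoidalDistributedAlphabets{}, and then to check that the resulting transition structures agree.

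First, given an asynchronous automaton $\mathcal{A}$ on a \distribution{} $(\Sigma_1, \dots, \Sigma_k)$ with local state sets $Q_i$ and local transitions $\Delta_\sigma$, I form the \monoidalDistributedAlphabet{} $\Gamma$ corresponding to the \distribution{} via \Cref{prop:dists}. The sorts of $\Gamma$ are $\{1, \dots, k\}$, and there is one generator $\gamma_\sigma : \loc{\sigma} \to \loc{\sigma}$ per letter $\sigma \in \Sigma$. I take the same family $\{Q_i\}_{i \in S_\Gamma}$ of state sets and define the transition function of the \symmetricMonoidalAutomaton{} on each generator by $\Delta_{\gamma_\sigma} := \Delta_\sigma$. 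The initial word $\overrightarrow{i}$ of $\mathcal{A}$ is taken as initial word for the monoidal automaton; for the set $F$ of final words, one either extends \Cref{defn:ndmonaut} slightly (as suggested in \Cref{rem:boundary}) to allow a set of final words, or forms one monoidal automaton per element of $F$ and unions the accepted languages. The reverse direction is identical: a \symmetricMonoidalAutomaton{} over a \monoidalDistributedAlphabet{} $\Gamma$ yields an asynchronous automaton on the \distribution{} of \Cref{prop:dists}, with the same local data.

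The key step is then to verify that the two ``global'' transition structures coincide. On the asynchronous side, the global action of $\sigma$ on $Q = \prod_i Q_i$ applies $\Delta_\sigma$ at coordinates in $\loc{\sigma}$ and leaves the others fixed. On the monoidal side, each generator $\gamma_\sigma$ is promoted by the functor $\Delta : \Fprop{\Gamma} \to \Rel[\Gamma][Q]$ to the endomorphism $\Delta(N(\gamma_\sigma))$ of $\substack{1 \\ \svdots \\ k}$ described in \Cref{prop:endo-pres}: symmetries gather the wires in $\loc{\sigma}$, $\Delta_{\gamma_\sigma}$ acts there, identities act elsewhere, and the inverse symmetries restore the order. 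By functoriality of $\Delta$ and the definition of $\otimes$ in $\Rel[\Gamma][Q]$, this precisely realizes the global transition $\overrightarrow{q} \xrightarrow{\sigma} \overrightarrow{q\,}'$ of \Cref{defn:asyncaut}. Consequently, for a word $w = \sigma_1 \cdots \sigma_m \in \Sigma^*$, the extended asynchronous transition from $\overrightarrow{i}$ to some final word matches $f \in \Delta(N(\gamma_{\sigma_1}) \comp \cdots \comp N(\gamma_{\sigma_m}))(\overrightarrow{i})$.

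Finally, I lift this agreement from words to traces. By \Cref{prop:endo-pres} and \Cref{lemma:monoidaltrace}, two serializations $w, w'$ of the same trace correspond to monoidal traces that differ only by interchange of generators with disjoint \textsf{loc}, and hence define equal morphisms in $\traceEndo{\Gamma}$. Thus $\Delta$ assigns the same relation to every serialization of a given trace, which matches exactly the way asynchronous automata accept a trace iff some (equivalently, every) serialization is accepted. Combining this with \Cref{thm:monoidaltrace} shows that the trace language of $\mathcal{A}$ coincides with the \monoidalTraceLanguage{} of the corresponding \symmetricMonoidalAutomaton{}, and conversely.

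The main obstacle is the mismatch between the treatment of final states (a set $F \subseteq Q$ for asynchronous automata versus a single word in our monoidal definition) together with the careful bookkeeping required to see that the ``bureaucracy'' of symmetries inside $N(\gamma_\sigma)$ is precisely what implements the ``identity on coordinates outside $\loc{\sigma}$'' clause in \Cref{defn:asyncaut}; once these are handled, the rest is a routine translation.
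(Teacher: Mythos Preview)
Your proof is correct and follows essentially the same route as the paper: translate the data via \Cref{prop:dists}, keep the same local state sets and transition relations, and invoke \Cref{lemma:monoidaltrace} to match up the accepted trace languages. The one point handled differently is the set $F$ of final words: the paper first normalizes $\mathcal{A}$ to have a single final state word (``in the usual way by introducing a new final state word and modifying transitions appropriately''), whereas you propose either extending the definition or taking a finite union of automata; your treatment is arguably more explicit on this point, and your account of why $\Delta(N(\gamma_\sigma))$ realizes the global transition of \Cref{defn:asyncaut} spells out what the paper leaves as ``by construction''.
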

\begin{proof}
  An asynchronous automaton with multiple final state words can be normalized to a single final state word in the usual way by introducing a new final state word and modifying transitions appropriately. Then a \symmetricMonoidalAutomaton{} can be constructed by taking the \monoidalDistributedAlphabet{} associated to the distribution of $Σ$ (\Cref{prop:dists}), the same transition relations, initial and final state words. We show that the languages coincide. Let $w \in \Lang(𝒜)$, and consider the corresponding trace $[w]$. Using \Cref{lemma:monoidaltrace}, we can produce the corresponding monoidal trace. By construction, this is accepted by the \symmetricMonoidalAutomaton{} defined above. The converse is analogous.
\end{proof}\vspace*{-2mm}

As a corollary, we can invoke \Cref{thm:zielonka} to obtain \Cref{thm:rec}. In contrast to asynchronous automata, the constructed \symmetricMonoidalAutomaton{} directly accepts traces qua string diagrams, rather than a language of words corresponding to a trace language. %

\begin{observation} \label{obs:async-are-monoidal}
  Jesi, Pighizzini, and Sabadini \cite{Jesi1996} introduced probabilistic asynchronous automata. Initial and final states, and transition relations are replaced by initial and final distributions, and stochastic transitions. These are precisely what are obtained if the powerset monad in our definition of non-deterministic \monoidalAutomaton{} (\Cref{rem:monad}) is replaced with the \emph{distribution monad} \cite{nlab:distmonad}, whose Kleisli category has morphisms stochastic matrices.
\end{observation}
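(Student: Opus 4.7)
The plan is to proceed in close parallel with the proof of \Cref{prop:monasync}, but working throughout in the Kleisli category of the distribution monad $\mathscr{D}$ in place of the powerset monad $\mathscr{P}$. First I would invoke \Cref{rem:monad}: since $\mathscr{D}$ is commutative, replacing $\mathscr{P}$ by $\mathscr{D}$ in the construction of $\Rel[\Gamma][Q]$ yields a well-defined prop $\mathsf{S}_Q$ whose morphisms $c_1 \cdots c_n \to c_1' \cdots c_m'$ are stochastic matrices $\prod_i Q_{c_i} \to \mathscr{D}(\prod_j Q_{c_j'})$, with composition given by matrix multiplication, tensor on morphisms by the Kronecker product, and symmetries by permutation stochastic matrices. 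A stochastic \symmetricMonoidalAutomaton{} is then a morphism of monoidal graphs $\Gamma \to \mathscr{U}\mathsf{S}_Q$, equivalently by the $\Fprop{} \dashv \mathscr{U}$ adjunction a strict monoidal functor $\Fprop{\Gamma} \to \mathsf{S}_Q$, together with initial and final distributions on the global state space (encoded as morphisms from/to the monoidal unit).

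Next I would recall from Jesi, Pighizzini and Sabadini that a probabilistic asynchronous automaton consists of a \distribution{} $(\Sigma_1, \dots, \Sigma_k)$ of $\Sigma$, finite state sets $Q_i$, a stochastic transition matrix $\Delta_\sigma$ on $\prod_{i \in \loc{\sigma}} Q_i$ for each $\sigma \in \Sigma$, and initial/final distributions on $Q = \prod_i Q_i$. The induced global stochastic transition for $\sigma$ is $\Delta_\sigma$ Kronecker-producted with the identity stochastic matrix on coordinates outside $\loc{\sigma}$: this is precisely the probabilistic analogue of the condition ``$q_i = q_i'$ for $i \notin \loc{\sigma}$'' in \Cref{defn:asyncaut}, expressing that non-participating runtimes are unchanged with probability $1$.

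Finally I would adapt the correspondence of \Cref{prop:monasync} to this stochastic setting. From a probabilistic asynchronous automaton, \Cref{prop:dists} produces a \monoidalDistributedAlphabet{} $\Gamma$ whose generators carry the locations $\loc{\sigma}$; assigning each generator its local stochastic matrix defines a morphism $\Gamma \to \mathscr{U}\mathsf{S}_Q$, and the adjunction delivers a stochastic \symmetricMonoidalAutomaton{}. The key point is that placing $\Delta_\sigma$ in parallel with identity wires in $\mathsf{S}_Q$ is by definition the Kronecker product with identity stochastic matrices, and interchange in $\Fprop{\Gamma}$ matches the commutativity of the Kleisli tensor, so by \Cref{lemma:monoidaltrace} the functor descends coherently to \MazurkiewiczTraces{}. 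The converse direction reads off local stochastic transitions from the image of each generator under the monoidal functor. The main obstacle will be matching acceptance semantics: one must unpack Jesi et al.'s operational assignment of probability to a trace as a sum of contributions from its serializations weighted by the global transition, and verify it agrees with the string-diagrammatic evaluation of an element of $\traceEndo{\Gamma}$ sandwiched between the initial and final distributions. Since the string-diagrammatic evaluation is automatically serialization-invariant and each serialization contributes the same product of local stochastic matrices extended by identities, the two computations coincide.
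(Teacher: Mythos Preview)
The paper does not supply a proof of this statement: it is labelled an \emph{Observation} and is simply asserted, with the content being that replacing $\mathscr{P}$ by the distribution monad in \Cref{rem:monad} yields precisely Jesi--Pighizzini--Sabadini's probabilistic asynchronous automata. Your proposal is therefore not so much a competing proof as a detailed elaboration of something the paper leaves to the reader.

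That elaboration is largely sound and follows the natural route, paralleling \Cref{prop:monasync} with $\mathscr{D}$ in place of $\mathscr{P}$. One point to tighten: you describe Jesi et al.'s probability of a trace as ``a sum of contributions from its serializations''. This is not how it works; the defining feature of (probabilistic) asynchronous automata is that every serialization of a trace yields the \emph{same} global transition, so the probability of a trace is simply the probability along any single serialization, not a sum over them. You effectively acknowledge this in your final sentence, but the earlier phrasing is misleading and should be corrected. With that fixed, your sketch is a correct and adequate unpacking of the observation, going well beyond what the paper itself provides.
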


\section{Serialization via Premonoidal Categories} \label{sec:serial}

Trace theorists often consider trace languages to be word languages with the property of \emph{trace-closure} with respect to an \independenceRelation{} \cite{maarand2019reordering}: if $u \in L$ and $u \equiv_{I} v$ then $v \in L$. These languages arise as preimages of \traceLanguages{} along the quotient map $q_{Σ,I} : Σ^{*} → \mathcal{T}(Σ,I)$. For $L \subseteq \mathcal{T}(Σ,I)$ a \traceLanguage{}, $q_{Σ,I}^{-1}(L) \subseteq Σ^{*}$ is its \emph{flattening} or \emph{serialization}.

In this section we show that the serialization of \monoidalTraceLanguages{} can be carried out using the algebra and string diagrams of symmetric premonoidal categories. Premonoidal categories are like monoidal categories, except interchange (\Cref{fig:interchange}) does not hold in general. The free (symmetric) premonoidal category on a \monoidalGraph{} was described using \stringDiagrams{} by Román \cite{effectful22}. The idea is simple: the \stringDiagrams{} are the same as for props, but an extra string (the ``runtime'') threads through each generator, preventing interchange. \Cref{fig:premonoidal} shows two premonoidal morphisms ${\color{mylightblue} \bullet} ⊗ {\color{mygreen} \bullet} → {\color{mylightblue} \bullet} ⊗ {\color{mygreen} \bullet}$ that are not equal:

\begin{figure}[H]
  \centering
  \includegraphics[scale=0.8]{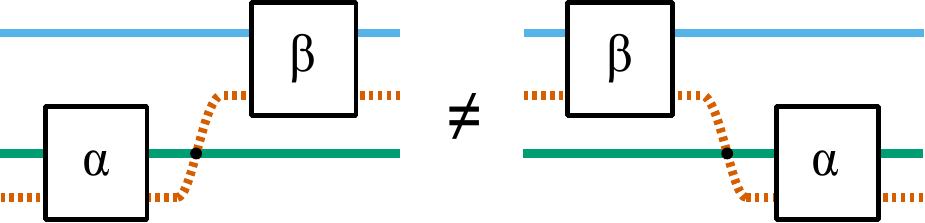}
  \caption{In the free premonoidal category over a \monoidalGraph{}, generators are augmented by a string on a new object called the \emph{runtime} (dashed red). This prevents interchange (cf. \Cref{fig:interchange}).}
  \label{fig:premonoidal}
\end{figure}

In \Cref{app:premonoidal}, we explain in more detail the construction of the free symmetric premonoidal category $\Fprem{Γ}$ on a \monoidalGraph{} $Γ$ using string diagrams. In particular, the runtime string appears only once in each string diagram, reflecting that premonoidal categories do not have a tensor product on morphisms. The endomorphism monoid $\preEndo{Γ}$ is now the free monoid over the boxes of $Γ$, since the runtime prevents interchange:

\begin{proposition} \label{prop:pre-endo}
  Let $Γ$ be a \monoidalDistributedAlphabet{}. Then $\preEndo{Γ} ≅ B_Γ^{*}$, where $B_Γ$ is the set of boxes of $Γ$.
\end{proposition}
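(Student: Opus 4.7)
The plan is to adapt the strategy of \Cref{prop:endo-pres} but exploit the fact that the runtime string in $\Fprem{\Gamma}$ eliminates the interchange equations that forced a quotient in the prop case. The key algebraic fact I will use is that, because $\Gamma$ is a \monoidalDistributedAlphabet{}, each generator $\gamma \in B_\Gamma$ has $s(\gamma) = t(\gamma)$ consisting of distinct sorts, so the associated premonoidal string diagram $N(\gamma) : \stk{1}{n} \to \stk{1}{n}$, obtained by braiding the relevant wires together, applying $\gamma$ (with its runtime thread), and braiding back, is an endomorphism living in $\preEndo{\Gamma}$.

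First I would construct a monoid homomorphism $\beta : B_\Gamma^{*} \to \preEndo{\Gamma}$ by sending a letter $\gamma$ to $N(\gamma)$ and extending by composition along the runtime. Since composition in $\Fprem{\Gamma}$ threads the unique runtime string through successive generators, this is well-defined, and $\beta$ is clearly a monoid map out of the free monoid $B_\Gamma^{*}$.

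For the inverse $\alpha : \preEndo{\Gamma} \to B_\Gamma^{*}$, I would argue that any endomorphism $s \in \preEndo{\Gamma}$ can be put into general position following the discussion of \Cref{app:premonoidal} and Román \cite{effectful22}: each generator in $s$ is pierced by the single runtime wire, which cannot branch, is not permitted to be crossed by symmetries with other sort-wires in a way that changes generator order, and visits each generator exactly once. Hence the runtime induces a canonical total order $\gamma_1, \ldots, \gamma_n$ on the generators appearing in $s$, and I define $\alpha(s) := \gamma_1 \cdots \gamma_n$. I would verify that two general-position representatives of the same morphism yield the same word by observing that the premonoidal equivalence relation, as generated by planar isotopies fixing the runtime, can only permute wires between generators without changing their order along the runtime, never swap two generators (this is exactly the failure of interchange illustrated in \Cref{fig:premonoidal}). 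Thus $\alpha$ is well-defined, and is a homomorphism because composition in $\preEndo{\Gamma}$ concatenates runtime-threads.

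Finally I would check $\alpha \circ \beta = \mathrm{id}_{B_\Gamma^{*}}$ on generators (immediate, since $N(\gamma)$ contains exactly one generator threaded by the runtime) and $\beta \circ \alpha = \mathrm{id}_{\preEndo{\Gamma}}$ by induction on the number of generators in a general-position representative, using the fact that the boundary $\stk{1}{n}$ lets us reinsert the braidings needed to recover each $N(\gamma_i)$ up to premonoidal equality. I expect the main obstacle to be the well-definedness of $\alpha$: one has to be careful that the freedom to slide sort-wires past generators (which \emph{is} allowed in a symmetric premonoidal category, being naturality of the braiding for non-effectful morphisms such as symmetries and identities) never reorders two runtime-pierced generators. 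Once this is pinned down using the string-diagrammatic calculus of \Cref{app:premonoidal}, the isomorphism follows.
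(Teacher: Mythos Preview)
Your proposal is correct but takes a more hands-on route than the paper. The paper's (sketch) proof simply observes that the runtime monoidal graph $\Gamma_R$ is itself a \monoidalDistributedAlphabet{} in which every generator shares the runtime sort $R$; hence the induced independence relation is empty, the corresponding trace monoid is the free monoid $B_\Gamma^{*}$, and an appeal to \Cref{lemma:monoidaltrace} (which in turn rests on \Cref{prop:endo-pres}) yields the isomorphism in one line. Your approach instead unfolds this reduction directly: you replay the slicing construction of \Cref{prop:endo-pres} in the premonoidal setting, using the runtime thread to impose a \emph{canonical} total order on generators rather than one defined only up to interchange of independents. Both arguments rest on the same observation --- the runtime makes every pair of generators dependent --- but the paper's version reuses its earlier lemmas, whereas yours is more self-contained at the cost of duplicating that work. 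One small caution in your phrasing: in $\Fprop{\Gamma_R}$ the runtime wire \emph{can} be crossed by symmetries; the reason two generators cannot be reordered is solely that each of them is pierced by the runtime, not that such crossings are forbidden.
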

\begin{proof}\emph{(Sketch)}
  By augmenting the generators of $Γ$ with a new runtime, we create a \monoidalDistributedAlphabet{} in which every generator depends on every other, that is, the independence relation is empty. Thus the corresponding \traceMonoid{} is simply $B_Γ^{*}$. From here, we can follow the idea of \Cref{lemma:monoidaltrace}.
\end{proof}

We can define a morphism of monoids $\mathfrak{q}_Γ : \preEndo{Γ} → \traceEndo{Γ}$ by presenting $\preEndo{Γ}$ as in \Cref{prop:endo-pres}, and defining $\mathfrak{q}_Γ$ on generators by erasing the runtime string. \Cref{thm:preimage} then follows immediately from the definitions along with \Cref{lemma:monoidaltrace} and \Cref{prop:pre-endo}:

\begin{theorem} \label{thm:preimage}
  For every alphabet $B_Γ$, the following square of monoid homomorphisms commutes, where $q$ is the quotient monoid homomorphism.
\vspace*{-2mm}\[\begin{tikzcd}[ampersand replacement=\&]
	{B_Γ^{*}} \& {\tau(B_Γ,I)} \\
	{\preEndo{Γ}} \& {\traceEndo{Γ}}
	\arrow["\cong"', from=1-1, to=2-1]
	\arrow["q", from=1-1, to=1-2]
	\arrow["{≅}", from=1-2, to=2-2]
	\arrow["{\mathfrak{q}_Γ}", from=2-1, to=2-2]
\end{tikzcd}\]
\end{theorem}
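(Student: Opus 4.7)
The plan is to chase a single generator around the square, verify agreement on the nose, and then invoke the fact that every arrow in the square is a monoid homomorphism whose domain is free, so that agreement on generators propagates to arbitrary elements.

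First I would unfold each of the four maps on a generator $\gamma \in B_\Gamma$. The top arrow $q$ simply sends $\gamma$ to its equivalence class $[\gamma]$ in $\tau(B_\Gamma, I)$. The right-hand vertical, provided by \Cref{lemma:monoidaltrace}, sends $[\gamma]$ to the string diagram $N(\gamma) \in \traceEndo{\Gamma}$: symmetries that bring the strings at $\loc{\gamma}$ into adjacent position, the box $\gamma$ tensored with identities, and the inverse symmetries. The left-hand vertical, provided by \Cref{prop:pre-endo}, sends $\gamma$ to the analogous premonoidal string diagram $N_p(\gamma) \in \preEndo{\Gamma}$, which differs from $N(\gamma)$ only in that an additional runtime wire threads through the generator. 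Finally, $\mathfrak{q}_\Gamma$ is defined on $N_p(\gamma)$ precisely by erasing that runtime wire, yielding $N(\gamma)$. Thus both composite paths send $\gamma$ to the same element $N(\gamma)$ of $\traceEndo{\Gamma}$.

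Once commutativity on generators is established, I would extend to all of $B_\Gamma^*$ by the universal property of the free monoid. The left vertical and right vertical are isomorphisms of monoids, $q$ is the quotient monoid map by \Cref{defn:tracemonoid}, and $\mathfrak{q}_\Gamma$ is by construction a homomorphism (well-defined precisely because $\preEndo{\Gamma} \cong B_\Gamma^{*}$ is free, so no relations need to be checked on the domain). Hence both composites in the square are monoid homomorphisms $B_\Gamma^{*} \to \traceEndo{\Gamma}$; since they agree on the free generators, they are equal.

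There is no real obstacle here; the statement is essentially a bookkeeping consequence of the two previous isomorphism results, and the subtlest point is simply keeping track of which presentation of $\preEndo{\Gamma}$ is being used when defining $\mathfrak{q}_\Gamma$. One may wish to remark, as a consistency check, that independent generators $\gamma,\gamma'$ with $\loc{\gamma} \cap \loc{\gamma'} = \varnothing$ are sent by $\mathfrak{q}_\Gamma$ to commuting elements of $\traceEndo{\Gamma}$ (by \Cref{prop:endo-pres}), so that $\mathfrak{q}_\Gamma$ indeed factors the quotient $q$ through the isomorphisms, which is exactly the content of the commuting square.
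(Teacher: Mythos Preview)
Your proposal is correct and matches the paper's approach: the paper simply states that the theorem follows immediately from the definitions together with \Cref{lemma:monoidaltrace} and \Cref{prop:pre-endo}, and your generator-chase is precisely the unpacking of that ``immediate'' claim. There is nothing to add.
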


As a result, the preimage of a \monoidalTraceLanguage{} under the morphism $\mathfrak{q}_Γ$ corresponds to the serialization of that language.

\section{Conclusion} \label{sec:conclusion}

There are several directions in which our theory could be developed. A semi-independence relation drops symmetry from an independence relation: it is simply an irreflexive relation.  This gives rise to the theory of semicommutations \cite{semicommutations}, in which \emph{directed} commutations may occur e.g. $ab \to ba$, but not vice-versa. This allows for a more fine-grained specification of concurrency. In terms of monoidal languages, it suggests consideration of monoidal distributed alphabets in which the sources and targets of generators may differ.

As noted in \Cref{rem:effectful}, our treatment of trace languages suggests a generalization of the notion of \emph{effectful category} \cite{effectful22} (which include premonoidal categories), in which there are \emph{multiple runtimes}. This would enable a semantics for concurrent systems in which we can consider not only \emph{atomic} actions, but also actions with input and output types. We plan to pursue this axiomatically in future work.

Mazurkiewicz originally introduced traces to give semantics to Petri nets, and showed that this semantics is compositional with respect to \emph{synchronization} of traces \cite{mazurkiewicz}. Petri nets have been given semantics in monoidal categories \cite{baez2021categories,MESEGUER1990105}, and so the precise relationship of our monoidal formulation of traces to Petri nets remains to be worked out. In particular, this would involve understanding trace synchronization in terms of monoidal categories.

Finally, proofs of Zielonka's theorem (\Cref{thm:zielonka}, see \cite{zielonka} for details) remain highly technical, despite several simplifications since Zielonka's version. Investigation of whether the algebra of monoidal categories might yield further simplifications is an intriguing direction.

\bibliography{main}

\appendix

\section{Symmetric Strict Premonoidal Categories and Functors} \label{app:premonoidal}
We recall the definitions of (symmetric) strict premonoidal categories and their functors. For more details, see the papers \cite{power_robinson_1997,effectful22}.

\begin{definition}
  A strict premonoidal category is a category $𝒞$ equipped with:
  \begin{itemize}
  \item for each pair of objects $A,B \in 𝒞$ an object $A⊗B$,
  \item for each object $A \in 𝒞$ a functor $A ◁ -$ whose action on objects sends $B$ to $A ⊗ B$,
  \item for each object $A \in 𝒞$ a functor $- ▷ A$ whose action on objects sends $B$ to $B ⊗ A$,
  \item a unit object $I$,
  \end{itemize}
  such that,
  \begin{itemize}
  \item for each object $A \in 𝒞$, strict unitality $I ⊗ A = A = A ⊗ I$ holds,
  \item for each triple of objects $A,B,C \in 𝒞$, strict associativity $A ⊗ (B ⊗ C) = (A ⊗ B) ⊗ C$ holds.
  \end{itemize}
\end{definition}

The families of functors $A ◁ -, - ▷ A$ are called the \emph{whiskerings} with $A$: in a premonoidal category we do not have a tensor product of morphisms in general, but we can put an identity on either side of a morphism. A morphism $f : A → B \in 𝒞$ is \emph{central} if for every morphism $g : C → D$, $(B ◁ g)∘(f ▷ C) = (f  ▷ C)∘(A ◁ g)$, in other words, $f$ is central if it interchanges with every other morphism $g$.

\begin{definition}
A strict premonoidal category is symmetric if it is further equipped with a natural isomorphism whose components $c_{A,B} : A ⊗ B → B ⊗ A$ are central and such that $c_{B,A} ∘ c_{A,B} = 1_{A ⊗ B}$.
\end{definition}

\begin{definition}
  A strict premonoidal functor $F : 𝒞 → 𝒟$ is a functor sending central morphisms to central morphisms and such that $F(I_𝒞) = I_𝒟, F(X ⊗ Y) = F(X) ⊗ F(Y)$.
\end{definition}

\subsection{String Diagrams for Premonoidal Categories} 
We recall the construction of the free symmetric strict premonoidal category over a \monoidalGraph{}. This is a special case of the construction of free \emph{effectful categories} in \cite[Section 2.3]{effectful22}.

We first define the \emph{runtime monoidal graph} over a \monoidalGraph{}, which augments the generators with a new wire:

\begin{definition}
  Let $𝒢$ be a \monoidalGraph{}. Let $R$ be a sort disjoint from $S_𝒢$. The \emph{runtime \monoidalGraph{}} $𝒢_R$ has sorts $S_𝒢 + \{R\}$ and for each generator $γ : S_1...S_n \to S'_1...S'_m$ in $𝒢$ a generator $γ : RS_1...S_n \to RS'_1...S'_m$.
\end{definition}

Graphically we can depict $𝒢_R$ as in \Cref{fig:runtime-graph} (right):

\begin{figure}[H]
  \centering
  \includegraphics[scale=1.5]{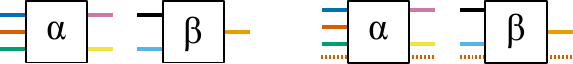}
  \caption{Left: A monoidal graph $𝒢$. Right: the associated runtime monoidal graph $𝒢_R$, where the new sort $R$ is drawn as a dashed string.} \label{fig:runtime-graph}
\end{figure}

\begin{definition}
  The symmetric \emph{runtime monoidal category} is the free prop $\Fprop{𝒢_R}$ on $𝒢_R$.
\end{definition}

\begin{theorem}
  The free symmetric strict premonoidal category $\Fprem{𝒢}$ on a \monoidalGraph{} $𝒢$ has set of objects $S_𝒢$ and a morphism $S_1 ⊗ ... ⊗ S_n → S_1' ⊗ ... ⊗ S_m'$ is a morphism $R ⊗ S_1 ⊗ ... ⊗ S_n → R ⊗ S_1' ⊗ ... ⊗ S_m'$ in the symmetric runtime monoidal category.
\end{theorem}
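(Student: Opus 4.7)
The plan is to verify two things: first, that the category $\Fprem{\mathcal{G}}$ defined in the statement is actually a symmetric strict premonoidal category; second, that it satisfies the universal property of the free such category generated by $\mathcal{G}$.

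For the first part, composition and the unit object transfer directly from the runtime prop $\Fprop{\mathcal{G}_R}$, and the tensor on objects is word concatenation on $S_\mathcal{G}^{*}$. The left and right whiskerings are defined string-diagrammatically: given $\hat f : R \otimes \vec S \to R \otimes \vec S'$ representing a morphism $f$ of $\Fprem{\mathcal{G}}$, the whiskering $A ◁ f$ is obtained by using prop symmetries to route the runtime past $A$, applying $\mathrm{id}_A \otimes \hat f$, and then routing $R$ back above $A$; $f ▷ A$ is analogous. Bifunctoriality of whiskerings together with strict associativity and unitality of $\otimes$ reduce to calculations in $\Fprop{\mathcal{G}_R}$ using the symmetric monoidal axioms. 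The premonoidal symmetry $c_{A,B}$ is defined as $\mathrm{id}_R \otimes \sigma_{A,B}$ in the runtime prop; self-inverseness and centrality are immediate, since $\sigma_{A,B}$ does not touch the runtime wire and interchanges freely with every morphism in $\Fprop{\mathcal{G}_R}$.

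For the universal property, let $\phi : \mathcal{G} \to \mathscr{U}(\mathcal{D})$ be a graph morphism into a symmetric strict premonoidal category $\mathcal{D}$. The strategy is to put an arbitrary morphism $f$ of $\Fprem{\mathcal{G}}$ in general position along its runtime wire (as in the proof of \Cref{prop:endo-pres}), writing $f$ as a composite of slices each containing exactly one generator of $\mathcal{G}$ whiskered by identities, interleaved with pure symmetries that route wires. Define $\bar\phi(f) \in \mathcal{D}$ by replacing each generator $\gamma$ with $\phi(\gamma)$ via the whiskering structure of $\mathcal{D}$, each pure symmetry with the premonoidal symmetry of $\mathcal{D}$, and composing. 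Uniqueness of $\bar\phi$ is immediate, since any strict premonoidal functor is determined by its action on generators, identities, symmetries, whiskerings and composition.

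The main obstacle is the well-definedness of $\bar\phi$: different slicings of $f$ yield the same morphism in $\Fprem{\mathcal{G}}$, so they must map to the same morphism in $\mathcal{D}$. The critical observation is that inside $\Fprop{\mathcal{G}_R}$, two generators of $\mathcal{G}$ can never be interchanged with one another by the symmetric monoidal axioms, because each carries the unique runtime wire through it and the prop equations never duplicate or delete wires; any attempted slide would require the runtime to simultaneously exit both generators, violating the single-runtime constraint. Consequently, the only rewrite moves relating two slicings are naturality of symmetries past individual morphisms and interchange of a generator with a pure (runtime-free) diagram, both of which are validated in $\mathcal{D}$ by the centrality of its symmetries and the defining axioms of a symmetric strict premonoidal category. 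This yields the required universal property.
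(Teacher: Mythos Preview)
Your proposal is correct in outline and takes a genuinely different route from the paper. The paper gives essentially no argument of its own: it simply cites Rom\'an's Theorem~2.14 on free effectful categories \cite{effectful22}, specialized to the case where the pure subcategory $\mathscr{V}$ is empty, and with symmetries added. Your direct verification is more self-contained and exposes the mechanism explicitly---the runtime wire linearly orders the generators in any string diagram of $\Fprop{\mathcal{G}_R}$, so generator--generator interchange is unavailable, and the remaining symmetric-monoidal equations (naturality of symmetry, functoriality of whiskering, $\sigma^2 = \mathrm{id}$) are precisely those validated by the premonoidal axioms in the target $\mathcal{D}$.

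The one place your argument is thinner than it appears is the well-definedness of $\bar\phi$. The claim that ``the only rewrite moves relating two slicings are naturality of symmetries and interchange of a generator with a pure diagram'' is itself a coherence statement about the free prop, not something you have proved. To make it rigorous you either need to invoke a normal-form or coherence theorem for symmetric monoidal string diagrams, or---closer to Rom\'an's strategy---build from $\mathcal{D}$ an auxiliary symmetric \emph{monoidal} category receiving $\mathcal{G}_R$ and use the universal property of $\Fprop{\mathcal{G}_R}$ directly, which sidesteps the slicing argument entirely. As a sketch your reasoning is sound, and the central insight (the shared runtime blocks interchange of generators) is exactly the right one.
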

\begin{proof}
  The proof follows \cite[Theorem 2.14]{effectful22}, in the case where $\mathscr{V}$ is empty, and taking instead the free symmetric strict monoidal category.
\end{proof}

In particular note that we no longer have a tensor product of morphisms in $\Fprem{G}$, since the runtime must appear only once in each domain and codomain, but we do have whiskerings for each object.

Consequently the string diagrams for morphisms $A → B$ in $\Fprem{𝒢}$ are just morphisms $R ⊗ A → R ⊗ B$ in the symmetric runtime monoidal category \cite[Corollary 2.15]{effectful22}.

\end{document}